\tikzset{>={Latex[width=2mm,length=2mm]}}
\newcommand{\E}{\mathsf{E}}
\newcommand{\R}{\mathbb{R}}
\newcommand{\N}{\mathcal{N}}
\newcommand{\eg}{{i.e.}, }
\newcommand{\I}{\mathfrak{I}}
\newcommand{\U}{\mathcal{U}}
\newcommand{\Q}{\mathcal{Q}}
\newcommand{\bbN}{\mathbb{N}}
\newcommand{\p}{\mathcal{P}}
\newcommand{\Y}{\mathcal{Y}}
\newcommand{\X}{\mathcal{X}}
\newcommand{\s}{\mathcal{S}}
\newcommand{\A}{\mathcal{A}}
\newcommand{\W}{\mathcal{W}}
\newcommand{\T}{^{\mbox{\tiny \sf T}}}
\newcommand{\tr}{{\rm{tr}}}
\newtheorem{thm}{Theorem}[section]
\newtheorem{lm}[thm]{ Lemma}
\newtheorem{pr}[thm]{ Proposition}
\newtheorem{rem}[thm]{ Remark}
\newtheorem{corr}[thm]{ Corollary}
\title{\LARGE \bf 
Optimal Controller Synthesis and  Dynamic Quantizer Switching\\ for 
Linear-Quadratic-Gaussian Systems
}
\author{Dipankar Maity and Panagiotis Tsiotras
\thanks{}
\thanks{The authors are with the Guggenheim School of Aerospace Engineering, Georgia Institute of Technology, Atlanta, GA, 30332, USA.
        {Email: \small dmaity@gatech.edu, tsiotras@gatech.edu}}%
\thanks{This work has been supported by ARL under DCIST CRA W911NF-17-2-0181 and by ONR award N00014-18-1-2375.}
}
\begin{document}

\maketitle
\thispagestyle{empty}
\pagestyle{empty}

\begin{abstract}
In networked control systems, often the sensory signals are quantized before being transmitted to the controller. Consequently,  performance is affected by the coarseness of this quantization process. Modern communication technologies allow users to obtain resolution-varying quantized measurements based on the prices paid. 
 In this paper, we consider optimal controller synthesis of a Quantized-Feedback Linear-Quadratic-Gaussian (QF-LQG) system where the measurements are to be quantized before being transmitted to the controller.
  The system is presented with several choices of quantizers, along with the cost of operating each quantizer. 
  The objective is to jointly select the quantizers and the controller that would maintain an optimal balance between control performance and quantization cost. 
  Under certain assumptions,  this problem can be decoupled into two optimization problems: one for optimal controller synthesis and the other for optimal quantizer selection.
   We show that, similarly to the classical LQG problem, the optimal controller synthesis subproblem is characterized by  Riccati equations.
    On the other hand, the optimal quantizer selection policy is found by solving a certain Markov-Decision-Process (MDP). 
\end{abstract}

\section{INTRODUCTION}

Increasingly, many control systems nowadays consist of multiple sensors, actuators and plants that are spatially distributed. Control of such  systems requires uninhibited and reliable exchange of signals among these components over a shared communication network. 
Often, the underlying communication network suffers from several limitations such as insufficient bandwidth, noisy transmissions, or delays.
Although many of these abovementioned limitations can be alleviated with current advancements in communication technologies, at the same time  it may be expensive to deploy such a communication infrastructure. 
Therefore, the performance of such systems no longer depends solely on the controller structure but also on the underlying communication infrastructure and the associated communication cost. 

In a typical communication framework, signals are quantized (encoded) before being transmitted through a channel. Upon receiving the transmitted signal, a reconstruction (decoding) is performed to estimate the original signal. Thus, the quality of the quantization dictates the distortion in the reconstructed signal. 
Higher resolution  quantization results in lower distortion. This typically requires a higher number of bits to represent the quantized signal, and hence higher channel bandwidth for transmission. Depending on the criticality of the task, at certain times high-resolution quantization may be required, while other times coarser resolution is sufficient. Therefore, the quantization selection must be adapted optimally over the time horizon to meet the expected quantization-resolution of the transmitted signal with minimal use of the communication resources. 

In this work we consider the classical LQG control problem under quantization constraints which can be traced back to \cite{ borkar1997lqg, tatikonda2000control, tatikonda1998control, nair2004stabilizability, tatikonda2004stochastic}. 
While in most of the prior works, the emphasis has been on the joint design of controller and quantizer, in this paper we take a different approach where instead of designing a quantizer, we formulate a quantizer scheduling problem. 
In \cite{nair2004stabilizability} and other related works, the necessity for a time-varying quantizer for stability of linear systems has been studied. 
Recent works, such as \cite{kostina2019rate}, also show the tradeoff between control cost and communication data-rate in the context of infinite-horizon LQG problems.
Although LQG optimal control with quantized measurements has been studied for decades, the optimal structure of the controller and quantizer is, however, still unknown. 
Approximate solutions to the optimal quantizer and controller (for infinite time horizon) synthesis problem have been constructed under restrictive assumptions on the quantization schemes such as lattice quantization \cite{kostina2019rate}, entropy coded dithered quantizer for single-input-single-output systems \cite{silva2010framework,silva2015characterization} etc.

One may alternatively think of the optimal quantizer design problem as an equivalent problem of \textit{selecting} the optimal quantizer in the space of quantizers. 
In this work, instead of designing the quantizers (or equivalently finding the optimal quantizer in the space of quantizers), we ask whether one can find the optimal quantizer(s) from a given set of quantizers. 
To proceed with this framework, we assume that a control system can choose from a given set of quantizers to quantize its measurements and transmit the quantized signal to the controller. 
The set of available quantizers is given a priori along with the cost associated with using each quantizer. 
While the controller aims to minimize the expected quadratic cost, the measurements available to the controller are only the quantized state information. 
It is worth mentioning here that the observation equations are no longer linear due to the quantization process. 
The optimization problem under this framework is a bi-variable decision-making problem where one variable is the control strategy and the other one is the scheduling of the quantizers at each time instance. 

\subsection{Prior and Related Work}

Some of the earlier works on quantization and control can be traced backed to 1970s  \cite{curry1970estimation,schweppe1968recursive, moroney1983issues, delchamps1989extracting}. 
Studies of LQG control under communication constraints with a focus on quantization have been performed in several works such as \cite{borkar1997lqg, tatikonda2000control, tatikonda1998control, tatikonda2004stochastic}, 
 \cite{williamson1989optimal}, \cite{liu1992optimal},\cite{tanaka2016optimal}. 
 For example, \cite{liu1992optimal} considered LQG problems with explicit consideration of the quantization error associated with analog-to-digital implementation. 
 The studies in \cite{williamson1989optimal} and \cite{liu1992optimal}  showed that the optimal controller does \textit{not} exhibit the {separation principle}. 
 It was shown that the optimal controller exhibits a separation principle and that the optimal input-quantizer has to be time-varying with certain specific quantization levels.
  In \cite{tatikonda2004stochastic} the authors provided necessary conditions for the controller to exhibit a separation principle.
   In \cite{borkar1997lqg}, the authors provided a quantization scheme that ensures the existence of a separation principle in the optimal controller. 
   In that work, it was proposed to quantize  a signal which the authors refer to as ``innovations,'' rather than quantizing the state. In this work, we will also adhere to the idea of quantizing the ``innovations" rather than quantizing the state itself.  

Studies on quantization-based control have  also dealt with the stability aspects of the system \cite{delchamps1990stabilizing, wong1997systems,wong1999systems,ishii2003quadratic,li2004robust}. In \cite{wong1997systems} and \cite{wong1999systems}, the authors explicitly considered the issues of quantization, coding and delay. 
The concept of \textit{containability} was used for studying the stability  of linear  quantized systems.
 In \cite{fagnani2002stabilizing}, three quantization schemes (deadbeat, logarithmic, and chaotic) were proposed to ensure \textit{practical stability} of a linear system.
  Optimality of these three quantization schemes was addressed using the notion of {symbolic dynamics}. 
  Symbolic dynamics based analysis was also used in \cite{delchamps1989extracting} for extracting state information from quantized measurements. In \cite{elia2001stabilization} it was shown that the least dense quantizer that quadratically stabilizes a single input linear system is logarithmic. 
  A logarithmic quantizer with finite quantization level can only achieve \textit{practical stabilizability} (a relaxed notion of stabilizability).
   A quantization scheme with time-varying quantization sensitivity was studied in \cite{brockett2000quantized} proving asymptotic stability of the system. 
   In \cite{liberzon2003stabilization} the author derived a relationship between the norm of the transition matrix and the number of values taken by the encoder to ensure global asymptotic stability. 
   The work in \cite{nair2003exponential} addressed the problem of finding the smallest data rate above which exponential stability can be ensured.
    In a more recent work \cite{pearson2017control}, an event-based encoding scheme has been considered.

In the above-mentioned works \cite{borkar1997lqg}--\cite{li2004robust}, the role of quantization has been proven to be  crucial. 
However, for a given control objective, how to select among available quantizers that have an operational 
cost associated with them has not been addressed, and is the subject of this paper.
The problem is similar in spirit to the problem of optimal scheduling of costly sensors for control~\cite{aoki1969optimal,bansal1989simultaneous}, in the sense that measurements are costly and optimal measurements are chosen to maintain an optimal balance between the control performance and observation cost. 
It is however different from these works, in the sense that here we study the effects of quantization in producing the measurements sent to the controller whereas such works do not consider the role of quantization. 

\subsection{Contribution}

The contributions of this work are:

\begin{itemize}

\item We formulate an LQG optimal control problem with a set of \textit{costly quantizers} that quantize the measurements. We seek an optimal controller that minimizes the expected quadratic cost and an optimal selection of the quantizers that determine the quality of the measurements arriving at the controller.

\item We show that quantizing the innovations separates the controller synthesis problem from the \textit{quantizer selection} problem. 
Although the idea of innovation--quantization is presented in \cite{borkar1997lqg},  the initial state of the system was needed to be deterministic in that work. Our framework does not require such an assumption.

\item We study the \textit{optimal controller structure} and show that the controller is of a certainty-equivalence type. The controller gains can be computed offline and the gains do not depend on the parameters of the quantizers.

\item The study of the quantizer-selection reveals that the optimal strategy for the selection of the quantizers can be computed by solving an MDP. Moreover, depending on the information available to the quantizer-selector, the optimal strategy can be computed offline.

\end{itemize}

\subsection{Organization}

 The rest of the paper is organized as follows: in Section \ref{S:prob} we formally define the problem addressed in this paper; Section \ref{S:solution}  provides the structure for the optimal controller and the quantizer selection scheme; numerical examples illustrating the theory are presented in Section \ref{S:simu}.  Finally, we conclude the paper with a summary and some remarks in Section \ref{S:conclusion}.


\section{PROBLEM FORMULATION} \label{S:prob}

Let us consider an LTI discrete-time stochastic system  
\begin{align} \label{E:dyn}
X_{t+1}=AX_t+BU_t+W_t,
\end{align}
where for all $t\in \bbN_0 ~(= \mathbb{N}\cup\{0\})$, $X_t\in \R^n$, $U_t \in \R^m$, $A$ and $ B$ are matrices of compatible dimensions, and $\{W_t\}_{t\in \mathbb{N}_0}$ is an i.i.d noise sequence in $\R^n$ with statistics $W_0 \sim \N(0,\W)$. 
The initial state, $X_0$, is also a Gaussian random variable distributed according to $\N(\mu_0, \Sigma_0)$, and independent of the noise $W_t$ for all $t\in \mathbb{N}_0$. For notational convenience, we will write $X_0=\mu_0+ W_{-1}$ where $W_{-1}\sim \N(0,\Sigma_0)$. Thus, $W_k$ and $W_\ell$ are independent random variable for all $k,\ell = -1,0,1,\ldots$ and $k\ne \ell$.

In this work, we address the quantized feedback LQG (QF-LQG) optimal control problem. 
As shown in Figure \ref{Fig:schematic}, we assume that $M$ quantizers are provided to quantize the state value and transmit the quantized state to the controller. 
The range of the $i$-th quantizer is denoted by  $\Q^i=\{q_1^i,q_2^i,\ldots,q_{\ell_i}^i\}$. 
Associated with the $i$-th quantizer, let $\p^i=\{\p^i_1,\p^i_2,\ldots,\p^i_{\ell_i}\}$ denote a partition in $\R^n$ such that $\p^i_j$ gets mapped to $q^i_j$ for each $j\in \{1,2,\ldots,\ell_i\}$. 
Specifically, one may think of the $i$-th quantizer as a mapping $g_i:\R^n\to \Q^i$ such that $g_i(x)=q^i_j$ if and only if $x\in \p^i_j$. 
Thus, the $i$-th quantizer has $\ell_i$  quantization levels. 
Without loss of generality, we assume that the quantization error covariance decreases from the first quantizer to the $M$-th quantizer, \eg the quantization error covariance is the lowest for the $M$-th quantizer and highest for the first quantizer, and so on.
 Associated with each quantizer, there is an operating cost that must be paid in order to use this quantizer. 
Let $\lambda(\Q^i)=\lambda_i\in \R_+$ denote the cost associated with the $i$-th quantizer\footnote{This framework also extends to the scenario where there is no cost in using a quantizer, i.e., $\lambda_i=0$ for all $i=1,\ldots,M$.}. 
For example, $\lambda_i=\log_2\ell_i$ represents a cost that is proportional to the code-length to encode the output of the quantizer using a simple fixed-length coding scheme.  
In this work, we do not adhere to any specific structure for $\lambda$.
 We assume that the values of $\lambda_i$'s are given to us a priori. 
 If there is a cost for operating the communication channel, that cost can be also incorporated into $\lambda_i$. 
 For example, if every transmission of measurement requires a cost of $\lambda_c$, then $\lambda_i+\lambda_c$ represents the joint quantization and communication cost for using $i$-th quantizer to transmit a measurement.
 Designing such costs in order to regulate the use of the quantizers is an equally interesting problem for the service provider, and that will be addressed elsewhere.
 We will further assume that the communication channel  between each quantizer and the controller always transmits the quantized information without any delay or distortion.
 A discussion on the implications of delay and distortion is provided in Section~\ref{S:Discussion}.

\begin{figure}
\begin{tikzpicture}[scale=0.9, every node/.style={transform shape}]
\draw[fill, color=gray!30] (2.75,-2.75) rectangle +(6,1.7);
\node[anchor=north] at (6.5,-1) {\scriptsize{Set of Quantizers}};
\draw[] (0,0) rectangle +(2,1);
\node at (1,.5) {Controller};
\draw[] (3,0) rectangle +(2,1);
\node at (4,.5) {Plant};
\draw[] (6,0) rectangle +(3,1);
\node at (7.5,.5) {Quantizer Selector};
\draw[] (3.5,-3) --(8,-3);
\draw[fill=white] (3,-2.5) rectangle +(1,1);
\node at (3.5, -2) {$g_1(\cdot)$}; 
\draw[fill=white] (7.5,-2.5) rectangle +(1,1);
\node at (8, -2) {$g_M(\cdot)$}; 
\draw[fill=white] (4.5,-2.5) rectangle +(1,1);
\node at (5, -2) {$g_2(\cdot)$}; 
\draw[thick, dotted] (6,-2) -- (7,-2); 
\draw[->] (6,-3)-- (6, -3.5) -- (1,-3.5)-- (1,0);
\draw[->] (2,.5) --(3,.5);
\node[anchor=south] at (2.5,.5) {$U_t$};
\draw[->] (5,.5) --(6,.5);
\node[anchor=south] at (5.5,.5) {$X_t$};
\draw (3.5,-2.5)-- (3.5,-3);
\draw (5,-2.5)-- (5,-3);
\draw (8,-2.5)-- (8,-3);
\draw (3.5,-1.5)-- (3.5,-1);
\draw (5,-1.5)-- (5,-1);
\draw (8,-1.5)-- (8,-1);
\draw[->] (5.5, .5)--(5.5, -.5) -- (5.1,-.9);
\draw[->] (7.5, 0)--(7.5, -.75) -- (5.5,-.75);
\node[anchor =west] at (7.5, -0.5) {$\theta_t$};
\node[anchor=north] at (5,-3.5) {Quantized Signal $Y_t$};
\draw[->] (6, -3.5) -- (9.5,-3.5)-- (9.5,0.5)--(9,0.5);
\end{tikzpicture}
\caption{Schematic diagram of the system where the gray block contains the set of quantizers and the desirable quantizer ($g_i$) is selected by the quantizer selector variable $\theta$. 
} \label{Fig:schematic}
\end{figure}
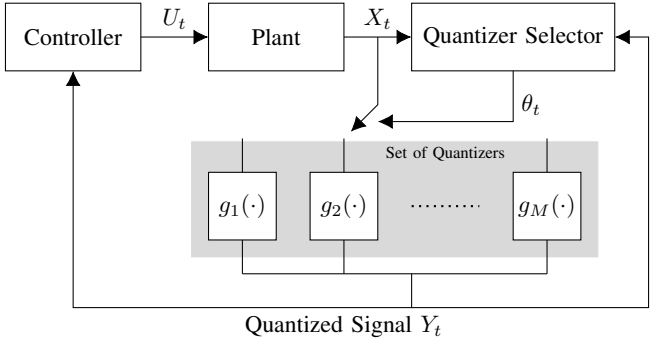


The objective is to minimize a finite-horizon performance index that takes into account the quantization cost. 
Contrary to the existing literature  on quantization-based LQG \cite{borkar1997lqg}-\cite{fischer1982optimal}, in our case there are two decision-makers instead of one: one (the controller) decides the input ($\{U_t\}_{t\in \bbN_0}$) to apply to the system, and the other (the quantizer-selector) decides the quality of the measurements (quantized state values) which are transmitted to the controller.
That is, the proposed work does not fit with the previous quantization literature that focuses primarily on stability issues 
or the \textit{design} of the optimal quantization scheme to maintain performance. 

We introduce a new decision variable $\theta^i_t$ for the quantizer-selector in the following way:
\begin{align*}
\theta_t^i =\begin{cases} &1, ~~~i\text{-th quantizer is used at time } t,\\
 &0,~~~\text{otherwise.}
\end{cases}
\end{align*}
Let us denote the vector $\theta_t\triangleq[\theta_t^1,\theta_t^2,\ldots,\theta_t^M]\T \in \{0,1\}^M$, that characterizes the decision of the quantizer-selector at time $t$. 
We enforce the quantizer-selector to select only one quantizer at any time instance, and hence we have $\sum_{i=1}^M\theta_t^i=1$  for all $t\in \bbN_0$.

The measurement available to the controller at time $t$ is represented as $Y_t=\sum_{i=1}^Mg_i(X_t)\theta_t^i=g_j(X_t)\theta^j_t$, if the $j$-th quantizer is selected at time $t$.\footnote{In the following analysis we shall quantize $W_t$  instead of $X_t$. This will help us preserve the separation-principle for the optimal controller as shown in \cite{borkar1997lqg}.} 
Let us also introduce the sets $\X_t\triangleq\{X_0,X_1,\ldots,X_t\}$, $\Y_t\triangleq \{Y_0,Y_1,\ldots,Y_t\}$, $\U_t\triangleq\{U_0,U_1,\ldots,U_t\}$ and $\Theta_t\triangleq\{\theta_0,\theta_1,\ldots,\theta_t\}$ to be the state history, measurement history, control history and quantization-selection history respectively. 
For convenience, we will use the notation $\U$ for $\U_{T-1}$ and, likewise, $\Theta$ for $\Theta_{T-1}$ to denote the history of the entire horizon $[0,T]$.

The information available to the controller at time $t$ is $\I_t^c=\{\Y_t,\U_{t-1},\Theta_t\}=\I_{t-1}^c\cup\{Y_t,U_{t-1},\theta_t\}$ with $\I_0^c=\{Y_0, \theta_0\}$. 
It should be noted that $\I^c_t$ depends on $\Theta_t$ through $\Y_t$. 
In classical optimal LQG control, the information available to the controller is not decided by any active decision maker, unlike the situation here.
An admissible control strategy at time $t$ is a measurable function from the Borel $\sigma$-field generated by $\I_t^c$ to $\R^m$. Let us denote such strategies by $\gamma^u_t(\cdot)$ and the space they belong to by $\Gamma^u_t$.
On the other hand, the information available to the quantizer-selector at time $t$ is $\I_t^q=\{\X_t,\Y_{t-1},\U_{t-1},\Theta_{t-1}\}=\I_{t-1}^q\cup\{X_t,Y_{t-1},U_{t-1},\theta_{t-1}\}$ with $\I_0^q=\{X_0\}$. 
The admissible strategies for the selection of the quantizers are measurable functions from the Borel $\sigma$-field generated by $\I_t^q$ to $\{0,1\}^M$. 
Let us denote such strategies by $\gamma^\theta_t(\cdot)$, and the space they belong to by $\Gamma^\theta_t$.
For brevity, often we will use $\gamma^u_t$ instead of $\gamma^u_t(\cdot)$ or $\gamma^u_t(\I_t^c)$, and $\gamma^\theta_t$ instead of $\gamma^\theta_t(\cdot)$ or $\gamma^\theta_t(\I^q_t)$.
 Let $\gamma^\Theta$ denote the entire sequence $\{\gamma^\theta_0,\gamma^\theta_1,\ldots,\gamma^\theta_{T-1}\}$ and let $\Gamma^\Theta$ denote the space where $\gamma^\Theta$ belongs to. 
 Likewise, $\gamma^\U$ and $\Gamma^\U$ are defined similarly.
The sequence of decision-making within one time instance is then as follows: \\
$\cdots\to  \I_t^q\overset{\gamma^\theta_t}{\rightarrow}  \theta_t\to Y_t \to \I_t^c\overset{\gamma^u_t}{\to} U_t\to X_{t+1}\to \I_{t+1}^q\to \cdots$.


The cost function to be minimized cooperatively by the controller and the quantizer-selector is a finite horizon quadratic criterion,  given as
\begin{align} \label{E:cost}
J(\U,\Theta)=\E\left[\sum_{t=0}^{T-1}(X_t\T Q_1X_t+U_t\T RU_t+\theta_t\T \Lambda) +X_T\T Q_2X_T \right],
\end{align}
where $\Lambda=[\lambda_1,\lambda_2,\ldots,\lambda_M]\T $, $Q_1,Q_2 \succeq 0, R\succ 0$, $\U=\gamma^\U(\I^c)=\{\gamma^u_0(\I_0^c), \gamma^u_1(\I_1^c),\ldots,\gamma_{T-1}^u(\I_{T-1}^c)\}$ and  $\Theta=\gamma^\Theta(\I^q)=\{\gamma^\theta_0(\I_0^q), \gamma^\theta_1(\I_1^q),\ldots,\gamma_{T-1}^\theta(\I_{T-1}^q)\}$.
We seek to find the optimal strategies $\gamma^{\U*}=\{\gamma_0^{u*},\gamma_1^{u*},\ldots,\gamma_{T-1}^{u*}\}$ and $\gamma^{\Theta*}=\{\gamma^{\theta*}_0,\gamma^{\theta *}_1,\ldots,\gamma^{\theta *}_{T-1}\}$ that minimize \eqref{E:cost}. 
To this end, we will also rewrite \eqref{E:cost} in terms of $\gamma^\U$ and $\gamma^\Theta$ as
\begin{align} \label{E:cost2}
J(\gamma^\U,\gamma^\Theta)=\E\Big[&\sum_{t=0}^{T-1}(X_t\T Q_1X_t+U_t\T RU_t+\theta_t\T \Lambda) +X_T\T Q_2X_T\nonumber\\&~~~|~U_t=\gamma^u_t(\I^c_t), \theta_t=\gamma^\theta_t(\I^q_t) \Big].
\end{align}
We emphasize that $\I^q_t$ contains the state value $X_t$ for all $t$. 
Such an information structure will be refereed to as \textit{perfect
measurement quantizer selection}. 
In Section \ref{S:openloop}, we will consider a different information structure for $\I^q_t$ where $X_t$ is not present, and  $\I^q_t$ contains the same quantized signals as the controller information. 
Such information pattern will be referred to as \textit{quantized measurement quantizer selection}.
The perfect measurement  quantizer selection leads to a MDP formulation, which is more computationally expensive to solve compared to the quantized measurement quantizer selection scenario which can be solved through linear programming.
As will be discussed in detail later, the available information for selecting the quantizers in the quantized measurement case is a subset of the information available for the perfect measurement case.
Thus, the perfect measurement scenario results in better performance (i.e., lower cost) than the quantized measurement case, albeit at an expense of higher computation complexity.

\section{Optimal Control and Quantization Selection} \label{S:solution}

\subsection{Perfect Measurement Quantizer Selection} \label{S:closeloop}

In this section we find the optimal $\gamma^{\U*}$ and $\gamma^{\Theta *}$ that minimize the cost function \eqref{E:cost2} amongst all admissible strategies, that is,
\begin{align}\label{E:argmin}
(\gamma^{\U*},\gamma^{\Theta*})=\underset{\gamma^\U\in \Gamma^\U,\gamma^\Theta\in \Gamma^\Theta}{\arg\min}J(\gamma^\U,\gamma^\Theta).
\end{align}
Before proceeding further to solve \eqref{E:argmin}, let us specify the input for the quantization process, since it will play a crucial role in the following analysis. Unlike other quantized feedback based control approaches \cite{williamson1989optimal}, \cite{liu1992optimal}, we will quantize $W_{t-1}$ instead of $X_t$ at time $t$. 
 Note that, $W_{t-1}$ can be readily computed from the values of $X_t, X_{t-1}, U_{t-1}$ that are included in $\I^q_t$. 
In the existing literature \cite{kostina2019rate,nair2004stabilizability} it has been shown that quantizing the state leads to a problem that is intractable, whereas in \cite{borkar1997lqg} the utility of noise quantization has been proposed. 
 The schematic for the noise quantization based framework is presented in Figure \ref{Fig:update_schematic}.
 
\begin{figure}
\begin{tikzpicture}[scale=0.9, every node/.style={transform shape}]
\centering
\draw[fill, color=gray!30] (2.75,-2.75) rectangle +(6,1.7);
\node[anchor=north] at (6.5,-1) {\scriptsize{Set of Quantizers}};
\draw[] (0,0) rectangle +(2,1);
\node at (1,.5) {Controller};
\draw[] (3,0) rectangle +(2,1);
\node at (4,.5) {Plant};
\draw[] (6,1.5) rectangle +(3,1);
\node at (7.5,2) {Quantizer Selector};
\draw[] (6,0) rectangle +(3.2,1);
\node at (7.6,.5) {Innovation extraction};
\draw[] (3.5,-3) --(8,-3);
\draw[fill=white] (3,-2.5) rectangle +(1,1);
\node at (3.5, -2) {$g_1(\cdot)$}; 
\draw[fill=white] (7.5,-2.5) rectangle +(1,1);
\node at (8, -2) {$g_M(\cdot)$}; 
\draw[fill=white] (4.5,-2.5) rectangle +(1,1);
\node at (5, -2) {$g_2(\cdot)$}; 
\draw[thick, dotted] (6,-2) -- (7,-2); 
\draw[->] (6,-3)-- (6, -3.5) -- (1,-3.5)-- (1,0);
\draw[->] (2,.5) --(3,.5);
\node[anchor=south] at (2.5,.5) {$U_t$};
\draw[->] (5,.5) --(6,.5);
\node[anchor=north] at (5.5,.5) {$X_t$};
\draw (3.5,-2.5)-- (3.5,-3);
\draw (5,-2.5)-- (5,-3);
\draw (8,-2.5)-- (8,-3);
\draw (3.5,-1.5)-- (3.5,-1);
\draw (5,-1.5)-- (5,-1);
\draw (8,-1.5)-- (8,-1);
\draw[->] (5.5, .5)--(5.5, 2) -- (6,2);
\draw[->] (7.5, 0)--(7.5, -.6) -- (8,-1);
\node[anchor =west] at (7.5, -0.3) {$W_{t-1}$};
\node[anchor=west] at (3,-3.75) {Quantized Signal $\hat{w}_{t-1}(\theta_t)$};
\draw[->] (9,2)--(9.5,2)--(9.5,-.7)--(7.75,-.7);
\node[anchor=south] at (9,-.7) {$\theta_t$};
\draw[->] (6, -3.5) -- (9.75,-3.5)-- (9.75,2.25)--(9,2.25);
\end{tikzpicture}
\caption{Schematic diagram of the (perfect measurement quantizer selection) system, 
where the gray block contains the set of quantizers and the desirable quantizer ($g_i$) is selected by the quantizer selector variable $\theta$.} \label{Fig:update_schematic}
\end{figure}
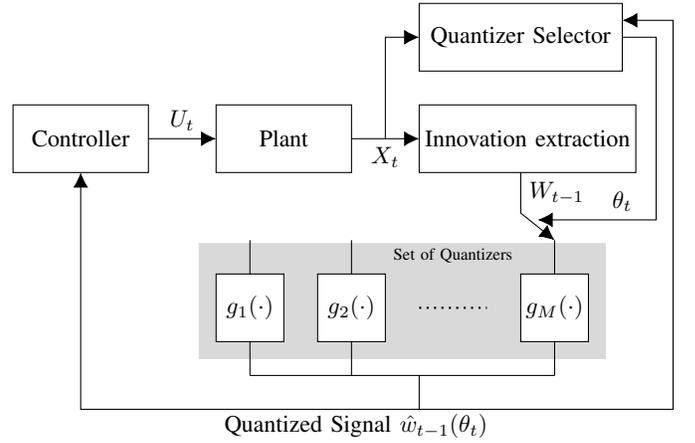

 Let $g_i(W_{t-1}) \in \Q^i$ denote the quantized version of  $W_{t-1}$ if the $i$-th quantizer is selected. 
 Therefore, the  quantized information sent to the controller is
 \begin{align} \label{E:measure}
Y_t=\sum_{i=1}^Mg_i(W_{t-1})\theta^i_t
 \end{align}
 
 Note that $Y_t$ is a random variable taking values in the discrete set $\cup_{i=1}^M\Q^i$ with $\mathsf{P}(Y_t=q^i_j)=\mathsf{P}(W_{t-1}\in \p^i_j)$. 
  Let us now define $\hat w_t^i\triangleq \E[W_t~|~g_i(W_t)]$. 
It should be noted that $\hat w^i_t$ is a function of $W_t$ although it is not explicitly expressed as such.
It is the estimate of the noise $W_{t}$ given the output from the $i$-th quantizer.
If the quantizers are optimal \cite{gersho2012vector}, the centroid condition\footnote{If $X$ is a random variable quantized by $\Q^i$, then $\Q^i$ is optimal if $q^i_j=\E[X|X\in \p^i_j]$ for all $ j=1,2,\ldots,\ell_i$.} of optimality implies: 
\begin{align} \label{E:indicatorw}
\hat{w}^i_t=\sum_{i=1}^{\ell_i}q^i_j1_{\p^i_j}(W_t),
\end{align}
where $1_S(\cdot)$ is the indicator function of the set $S$. 
Equation \eqref{E:indicatorw} reflects the fact that the optimal estimate\footnote{Unbiased estimation with minimum covariance.} of the noise would be the quantized value sent by the quantizer when the quantizer satisfies a certain optimality condition \cite{linde1980algorithm}. 
However, in our study, the quantizers need not be optimal. 
If the $i$-th quantizer has been used to quantize $W_t$ and the quantized value is $q^i_j=g_i(W_t)$, then we know that $W_t\in \p^i_j$.

  One can verify that $\E[\hat w^i_t]=\E\left[\E[W_t~|~g_i(W_t)]\right]=\E[W_t]=0$ for all $i=1,2,\ldots, M$ and $t =-1,0,\ldots,T-1$. 
It follows immediately from the definitions of $\I^c_t$ and $\I^q_t$ that
\begin{align} \label{E:w}
\E[W_t|\I^c_t]=\E[W_t|\I^q_t]=0,
\end{align} 
and
\begin{align}
 \E[W_t|\I^c_{t+1}]&=\E[W_t|\I^c_t,Y_{t+1},U_t,\theta_{t+1}]\nonumber\\&\stackrel{(a)}{=}\E[W_t|Y_{t+1},\theta_{t+1}]\stackrel{(b)}{=}\sum_{i=1}^M\theta^i_{t+1}\E[W_t|g_i(W_{t})]\nonumber \\&=\sum_{i=1}^M\theta^i_{t+1}\hat w^i_t\triangleq\hat w_t(\theta_{t+1}),
\end{align}
where (a) follows from the fact that $W_t$ is independent of $U_t$ and $\I^c_t$, and (b) follows from the facts that $\theta_{t+1}\in \{0,1\}^M$, $\sum_{i=1}^M\theta^i_{t+1}=1$ and  $Y_{t+1}=\sum_{i=1}^M\theta^i_{t+1}g_i(W_t)$.
To be more precise, we should have written $\hat{w}_t(\theta_{t+1})$ as $\hat{w}_t(\theta_{t+1},W_t)$ since $\hat{w}_t$ depends on the noise realization. 
Often times, for notational brevity, we will simply use $\hat{w}_t$ and suppress the $\theta_{t+1}$ argument.
Although we shall use $\hat{w}_t$ or $\hat{w}_t(\theta_{t+1})$ instead of $\hat{w}_t(\theta_{t+1},W_t)$, we must keep in mind that $\hat{w}_t$ is a random variable that depends on $W_t$.

Having a quantizer with just one quantization level is equivalent to operating in open-loop since no information about the input signal is retained in the quantized signal. 
Therefore, in our study, one can include a hypothetical quantizer $\Q^0$ with a single quantization level and cost $\lambda_0=0$ to account for the possibility to remain open-loop at any time whenever this quantizer is selected.

Let us also define $\hat X_t=\E[X_t|\I^c_{t-1}]$, which will be referred to as the \textit{prediction} of $X_t$, and $\tilde{X}_t=\E[X_t|\I^c_t]$ which  will be referred to as the \textit{filtered} version of $X_t$. 

Using \eqref{E:w} and the fact that $U_t$ is $\I^c_t$-measurable, one can write
\begin{align}
\hat X_{t+1}=A\tilde{X}_t+BU_t,
\end{align} 
where 
\begin{align} \label{E:filter}
\tilde{X}_t=&\E[X_t|\I^c_t]=\E[AX_{t-1}+BU_{t-1}+W_{t-1}|\I^c_t]\nonumber\\
=&A\tilde X_{t-1}+BU_{t-1}+\hat{w}_{t-1}(\theta_t)\nonumber\\
=&\hat X_t +\hat{w}_{t-1}(\theta_t).
\end{align}

Note that $\tilde{X}_t$ depends on $\Theta_t$ and $\hat X_t$ depends on $\Theta_{t-1}$. In \eqref{E:filter}, $\hat w_{t-1}(\theta_t)$ is the only term that depends on $\theta_t$.

Let us define the error  $\Delta_t=X_t-\tilde{X}_t$. It follows that,
\begin{align}\label{E:del}
\Delta_{t+1}=&A\Delta_t+W_t-\hat{w}_t \nonumber\\
=&A^{t+1}\Delta_0+\sum_{k=0}^tA^{t-k}(W_k-\hat{w}_k),
\end{align}
where $\Delta_0=W_{-1}-\hat{w}_{-1}(\theta_0)$. 
Therefore, the state estimation error, $\Delta_t$,  depends on $\{\theta_0, \ldots, \theta_t\}$ through the variables $\{\hat{w}_{-1},\ldots,\hat w_{t-1}\}$. It does not depend on the control strategy $\gamma^\U$. 
This implies a separation structure between the controller and the quantizer-selection. 
In the following, we will formally show the emergence of a separation-principle for this problem.

Associated with the cost function \eqref{E:cost2}, let us define the value function as follows:
\begin{align}\label{E:preV}
V_k(x)=&\min_{\{\gamma^u_t\}_{t=k}^{T-1},\{\gamma^\theta_t\}_{t=k}^{T-1}}\E\Big[\sum_{t=k}^{T-1}(X_t\T Q_1X_t+U_t\T RU_t+\theta_t\T \Lambda) \nonumber\\&~~~+X_T\T Q_2X_T~\big|~U_t=\gamma^u_t(\I^c_t), \theta_t=\gamma^\theta_t(\I^q_t),\nonumber\\&~~~~~~~~~~~~~~~~~~ X_k=x,t=k,\ldots,T-1 \Big].
\end{align}

By the optimality principle,
\begin{align} \label{E:V}
V_k(x)=&\min_{\gamma^u_k\in \Gamma^u_k,\gamma^\theta_k\in \Gamma^\theta_k}\E\Big[(X_k\T Q_1X_k+U_k\T RU_k+\theta_k\T \Lambda) \nonumber\\&+V_{k+1}(X_{k+1})~\big|~U_k=\gamma^u_k(\I^c_k), \theta_k=\gamma^\theta_k(\I^q_k),\nonumber\\&~~~~~~~~~~~~~~~~~~~~~X_k=x \Big].
\end{align}

If $\gamma^{u*}_k$ and $\gamma^{\theta*}_k$ minimize the right-hand-side of \eqref{E:V}, then $U_k^*=\gamma^{u*}_k(\I^c_k)$ and $\theta_k^*=\gamma^{\theta *}_k(\I^q_k)$.
From \eqref{E:preV}, we also have that
\begin{align} \label{E:exVal}
\min_{\gamma^\U\in \Gamma^\U,\gamma^\Theta\in \Gamma^\Theta}J(\gamma^\U,\gamma^\Theta)=\E[{V_0(X_0)}],
\end{align}
where the expectation in \eqref{E:exVal} is taken over the random variable $X_0$.
In order to maintain notational brevity in the subsequent analysis, we will write $V_k(x)$ as follows:
\begin{align*}
V_k(x)=\min_{\gamma^u_k,\gamma^\theta_k}\E_x\Big[&(X_k\T Q_1X_k+U_k\T RU_k+\theta_k\T \Lambda) \nonumber\\&+V_{k+1}(X_{k+1})~\big],
\end{align*}
where  $\E_x[\cdot]$ in this context will denote the conditional expectation given the event $X_k=x$, and $U_k$ and $\theta_k$ are implicitly assumed to be of the form $U_k=\gamma^u_k(\I^c_k), \theta_k=\gamma^\theta_k(\I^q_k)$ for some $\gamma^u_k\in \Gamma^u_k$ and $\gamma^\theta_k\in \Gamma^\theta_k$. 
The following  Theorem characterizes the optimal policy $\gamma^{u*}_k(\cdot)$ for all $k=0,1,\ldots,T-1$.\\

\begin{thm} \label{T:optcont}
Given the information $\I^c_k$ to the controller at time $k$, the optimal control policy $\gamma^{u*}_k:\I^c_k\to\R^m$ that minimizes the right-hand-side of \eqref{E:V} has the following structure
\begin{align}
U^*_k=\gamma^{u*}_k(\I^c_k)=-L_k\E[{X}_k|\I^c_k],
\end{align}
where for all $k =0,1,\ldots,T-1$, $L_k$ and $P_k$ are obtained by
\begin{subequations}
\begin{align}
\begin{split}
L_k=(R+B\T P_{k+1}B)^{-1}B\T P_{k+1}A, \label{E:eqLk}
\end{split}\\
\begin{split}
P_k= Q_1+A\T P_{k+1}A-L_k\T (R+B\T P_{k+1}B)L_k, \label{E:eqPk}
\end{split}\\
\begin{split}
P_T=Q_2.
\end{split}
\end{align}
\end{subequations}
\end{thm}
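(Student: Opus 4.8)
The plan is to prove the theorem by backward induction on the dynamic-programming recursion \eqref{E:V}, with the certainty-equivalence ansatz that the optimal cost-to-go in \eqref{E:V}, read as a function of the controller's information state $\tilde X_k=\E[X_k\mid\I^c_k]$, decomposes as
\[
V_k = \tilde X_k\T P_k \tilde X_k + r_k,
\]
where $P_k$ is generated by the Riccati recursion \eqref{E:eqPk} and $r_k$ lumps together the estimation-error penalties $\E[\Delta_t\T Q_1\Delta_t\mid\cdot]$, $\E[\Delta_T\T Q_2\Delta_T\mid\cdot]$, the accumulated quantizer costs $\theta_t\T\Lambda$, and the noise-covariance terms. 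The crucial property to carry through the induction is that $r_k$ does \emph{not} depend on the control strategy $\gamma^\U$ (it may, and does, depend on $\gamma^\Theta$, which is handled by the separate quantizer-selection subproblem). The base case $k=T$ holds with $P_T=Q_2$ and $r_T=0$, using $X_T=\tilde X_T+\Delta_T$ and $\E[\Delta_T\mid\I^c_T]=0$ to split $\E[X_T\T Q_2 X_T\mid\I^c_T]=\tilde X_T\T Q_2\tilde X_T+\E[\Delta_T\T Q_2\Delta_T\mid\I^c_T]$.

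For the inductive step I would substitute the ansatz into \eqref{E:V}, condition inside the expectation on $\I^c_{k+1}\supseteq\I^c_k$ via the tower property, and use \eqref{E:filter}, i.e. $\tilde X_{k+1}=A\tilde X_k+BU_k+\hat w_k(\theta_{k+1})$. Expanding $\tilde X_{k+1}\T P_{k+1}\tilde X_{k+1}$, the cross term $2(A\tilde X_k+BU_k)\T P_{k+1}\hat w_k(\theta_{k+1})$ vanishes in expectation because $\tilde X_k$ and $U_k$ are $\I^c_k$-measurable while $\hat w_k(\theta_{k+1})=\E[W_k\mid\I^c_{k+1}]$, so that $\E[\hat w_k(\theta_{k+1})\mid\I^c_k]=\E[W_k\mid\I^c_k]=0$ by \eqref{E:w} and the tower property. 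Likewise $\E[X_k\T Q_1 X_k\mid\I^c_k]=\tilde X_k\T Q_1\tilde X_k+\E[\Delta_k\T Q_1\Delta_k\mid\I^c_k]$ since $\E[\Delta_k\mid\I^c_k]=0$. What is left depending on the stage-$k$ control is exactly the quadratic
\[
\tilde X_k\T A\T P_{k+1}A\,\tilde X_k + 2\,\tilde X_k\T A\T P_{k+1}B\,U_k + U_k\T(R+B\T P_{k+1}B)\,U_k ,
\]
while all remaining terms ($\Delta_k,\Delta_{k+1}$-penalties, $\hat w_k\T P_{k+1}\hat w_k$, $\theta_k\T\Lambda$, $r_{k+1}$) are control-free by \eqref{E:del}, which shows the whole error process is unaffected by $\gamma^\U$.

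Completing the square in $U_k$ in the above quadratic gives the minimizer $U_k^*=-(R+B\T P_{k+1}B)^{-1}B\T P_{k+1}A\,\tilde X_k=-L_k\tilde X_k$ with $L_k$ as in \eqref{E:eqLk}, and the minimized value equals $\tilde X_k\T\bigl(Q_1+A\T P_{k+1}A-L_k\T(R+B\T P_{k+1}B)L_k\bigr)\tilde X_k=\tilde X_k\T P_k\tilde X_k$, which is \eqref{E:eqPk}; collecting the leftover terms into $r_k$ closes the induction. Since $U_k^*$ is a linear feedback of $\tilde X_k=\E[X_k\mid\I^c_k]$, it is $\I^c_k$-measurable and hence admissible, and because $(L_k,P_k)$ are computed from $(A,B,Q_1,Q_2,R)$ alone — with no reference to the quantizer ranges $\Q^i$, partitions $\p^i$, costs $\lambda_i$, or $\gamma^\Theta$ — the optimal controller is certainty-equivalent and decoupled from the quantizer-selection problem, i.e. the separation principle holds. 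Taking $k=0$ and averaging over $X_0$ recovers $\min J$ via \eqref{E:exVal}.

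I expect the main obstacle to be the bookkeeping forced by the two asymmetric information sets (the controller's $\I^c_t$ versus the quantizer-selector's $\I^q_t$, the former being coarser) together with the genuinely nonlinear observation map $Y_t=\sum_i g_i(W_{t-1})\theta^i_t$: one cannot invoke an off-the-shelf LQG separation theorem, so every cross-term cancellation must be traced back to the already-established identities — \eqref{E:w}, the representation $\hat w_k(\theta_{k+1})=\E[W_k\mid\I^c_{k+1}]$ (whence the martingale-difference property of $\hat w_k$ relative to $\{\I^c_t\}$), and the control-independence of $\Delta_t$ in \eqref{E:del}. A secondary subtlety is making precise the sense in which the cost-to-go in \eqref{E:V} is a function of $\tilde X_k$ rather than of the physical state $X_k$ (equivalently, that the controller's relevant information state is its conditional mean) and checking that the minimization over $\gamma^u_k$ in \eqref{E:V} decouples stagewise from the minimization over $\gamma^\theta_k$; both become routine once the above decomposition of the cost-to-go is in place.
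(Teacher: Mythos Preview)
Your argument is correct, but it follows a genuinely different route from the paper's own proof. The paper keeps the value function as a function of the \emph{physical} state, positing $V_k(x)=x\T P_k x + C_k + r_k$ with $C_k=\min_{\{\gamma^\theta_t\}}\E[\sum_{t\ge k}\Delta_t\T N_t\Delta_t+\theta_t\T\Lambda]$, completes the square in $X_k$ to isolate $\E_x\big[\|U_k+L_kX_k\|^2_{R+B\T P_{k+1}B}\big]$, and then invokes the minimum--mean--square--error characterization of conditional expectation to conclude that the $\I^c_k$-measurable minimizer is $U_k^*=-L_k\E[X_k\mid\I^c_k]$. You instead pass immediately to the controller's information state $\tilde X_k$, split every $X$-quadratic via the orthogonal decomposition $X_k=\tilde X_k+\Delta_k$ with $\E[\Delta_k\mid\I^c_k]=0$, propagate $\tilde X_{k+1}$ through \eqref{E:filter}, and complete the square directly in $\tilde X_k$, so no MMSE step is needed. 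What your approach buys is that the minimization over $U_k$ becomes a deterministic quadratic optimization (the measurability constraint is automatically met), and the separation is visible from the outset since every residual term is manifestly $\Delta$- or $\hat w$-dependent and hence control-free by \eqref{E:del}. What the paper's approach buys is that it works verbatim with the value function as defined in \eqref{E:preV}--\eqref{E:V}, namely conditioned on $X_k=x$; your ``secondary subtlety'' is real in that your ansatz $V_k=\tilde X_k\T P_k\tilde X_k+r_k$ is literally a different object from the paper's $V_k(x)$, and to match the statement exactly you would need one line identifying the two (e.g.\ $\E[V_k(X_k)\mid\I^c_k]=\tilde X_k\T P_k\tilde X_k+\E[\Delta_k\T P_k\Delta_k\mid\I^c_k]+C_k+r_k$, with the last three terms absorbed into your $r_k$).
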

\vspace{10 pt}

\begin{proof}
The proof of this theorem is based on the dynamic programming principle. 
If there exist value functions $V_k(\cdot)$ for all $k=0,1,\ldots,T$ that satisfy \eqref{E:V}, then the optimal control $U_k^*$ and the optimal quantizer selection $\theta_k^*$ are obtained by the policies $\gamma^{u*}_k$ and $\gamma^{\theta*}_k$ that minimize \eqref{E:V}. 

Let us assume that the value function at time $k =0,1,\ldots,T-1$ is of the form:
\begin{align} \label{E:hypoV}
V_k(x)=x\T P_kx+C_k+r_k,
\end{align}
where $P_k$ is as  in \eqref{E:eqPk}, and for all $k=0,1,\ldots,T-1$,
\begin{align} \label{E:ck_old}
C_k=\min_{\{\gamma^\theta_t\}_{t=k}^{T-1}}\E\left[\sum_{t=k}^{T-1}\Delta_t\T N_t\Delta_t+\theta_t\T \Lambda\right],
\end{align}
 where $N_k \in \R^{n\times n}$ and $r_k\in \R$ are given by
 \begin{subequations}
\begin{align}
N_k=&L_k\T (R+B\T P_{k+1}B)L_k, \label{E:eqNk} \\ 
r_k=&r_{k+1}+\tr(P_{k+1}\W), \label{E:eqrk} \\
r_T=&0.
\end{align}
\end{subequations}

{Therefore, neither $N_k$, nor $r_k$ depends on the past (or future) decisions on the control or quantizer-selection. Therefore, these quantities can be computed offline.}
Moreover, one  notices from \eqref{E:del} that $\Delta_k$ does not depend on the past control history $\U_k$ and it is solely characterized by $\Theta_k$. Thus, $C_k$ does not depend on the control action $U_k$.
Equation \eqref{E:ck_old} can thus be re-written as
\begin{align*}
C_{k}=\min_{\gamma^\theta_k}\E\left[\Delta_k\T N_k\Delta_k+\theta_k\T \Lambda+C_{k+1} \right].
\end{align*}

From the definition of $V_k(\cdot)$ in \eqref{E:preV}, we can write
$V_T(x)=x\T Q_2x=x\T P_Tx$ for all $x\in \R^n$.
 Next, we verify that $V_{T-1}(x)$ is of the form \eqref{E:hypoV}. Note that
\begin{align} \label{E:eqVT-1}
V_{T-1}(x)=\min_{\gamma^u_{T-1},\gamma^\theta_{T-1}}\E_x\Big[&X_{T-1}\T Q_1X_{T-1} +U_{T-1}\T RU_{T-1}\nonumber\\&+\theta_{T-1}\T \Lambda+X_T\T P_TX_T \Big].
\end{align}
Substituting the equation $X_T=AX_{T-1}+BU_{T-1}+W_{T-1}$, and after some simplifications, yields
\begin{align*}
V_{T-1}(x)=&\min_{\gamma^u_{T-1},\gamma^\theta_{T-1}}\E_x\Big[\|U_{T-1}+L_{T-1}X_{T-1}\|^2_{(R+B\T P_TB)}\\&+X_{T-1}\T P_{T-1}X_{T-1}+\theta_{T-1}\T \Lambda+\tr(P_T\W) \Big],
\end{align*}
where $\|\cdot\|^2_K$ denotes a weighted norm with $K$ being the weight matrix.
In the previous expression, $\|U_{T-1}+L_{T-1}X_{T-1}\|^2_{(R+B\T P_TB)}$ is the only term that depends on $U_{T-1}$. 
Therefore, we seek $\gamma^u_{T-1}:\I^c_{T-1}\to \R^m$  that minimizes the mean-square error $\E_x\left[\|U_{T-1}+L_{T-1}X_{T-1}\|^2_{(R+B\T P_TB)}\right]$. 
Thus, $U_{T-1}$ is a minimum mean squared estimate of $L_{T-1}X_{T-1}$. Hence,
\begin{align}
U_{T-1}^*=\gamma^{u*}_{T-1}(\I^c_{T-1})=-L_{T-1}\E[X_{T-1}|\I^c_{T-1}].
\end{align} 
After substituting the optimal $U^*_{T-1}$ in \eqref{E:eqVT-1}, we obtain
\begin{align*}
V_{T-1}(x)=\min_{\gamma^\theta_{T-1}}\E_x\Big[&\|X_{T-1}-\tilde X_{T-1}\|^2_{N_{T-1}}+\theta_{T-1}\T \Lambda\\& +\tr(P_T\W)+X_{T-1}\T P_{T-1}X_{T-1}\Big].
\end{align*}
Given the event $X_{T-1}=x$,  the above expression of $V_{T-1}(x)$ can be simplified as follows
\begin{align*}
V_{T-1}(x)=&\min_{\gamma^\theta_{T-1}}\E\left[\Delta_{T-1}\T N_{T-1}\Delta_{T-1}+\theta_{T-1}\T \Lambda\right]\\&+x\T P_{T-1}x+\tr(P_T\W).
\end{align*}
Therefore, using the definitions of $C_{T-1}$ and $r_{T-1}$, we obtain
\begin{align*}
V_{T-1}(x)=C_{T-1}+x\T P_{T-1}x+r_{T-1}.
\end{align*}
Thus, $V_{T-1}$ is of the form \eqref{E:hypoV}. 
Let us now assume that \eqref{E:hypoV} is true for some $k+1$. Then
\begin{align*}
V_k(x)=\min_{\gamma^u_k,\gamma^\theta_k}\E_x\Big[&(X_k\T Q_1X_k+U_k\T RU_k+\theta_k\T \Lambda) \nonumber\\&+V_{k+1}(X_{k+1})~\big]\\
=\min_{\gamma^u_k,\gamma^\theta_k}\E_x\Big[&(X_k\T Q_1X_k+U_k\T RU_k+\theta_k\T \Lambda) \nonumber\\+&X_{k+1}\T P_{k+1}X_{k+1}+r_{k+1}+C_{k+1}(\Delta_{k+1})~\big].
\end{align*} 

Using \eqref{E:dyn}, and after some simplifications, it follows that
\begin{align} \label{E:eqVk}
V_k(x)=&\min_{\gamma^u_{k},\gamma^\theta_{k}}\E_x\Big[\|U_{k}+L_{k}X_{k}\|^2_{(R+B\T P_{k+1}B)}+X_{k}\T P_{k}X_{k}\nonumber \\&+\theta_{k}\T \Lambda+\tr(P_{k+1}\W)+r_{k+1}+C_{k+1} \Big].
\end{align}
By the principle of minimum-mean-square estimate, the optimal $\I_k^c$-measurable control $U_k^*$ that minimizes  $\E\left[ \|U_{k}+L_{k}X_{k}\|^2_{(R+B\T P_{k+1}B)}\right]$ is given by
\begin{align} \label{E:u*}
U_k^*=\gamma^{u*}_k(\I_k^c)=-L_k\E\left[X_k|\I^c_k\right].
\end{align} 
After substituting the optimal control in \eqref{E:eqVk}, yields
\begin{align*}
V_k(x)=&x\T P_kx+\min_{\gamma^\theta_k}\E_x\Big[\Delta_k\T (L_k\T (R+B\T P_{k+1}B)L_k)\Delta_k\\&+\theta_k\T \Lambda+C_{k+1}\Big]+\tr(P_{k+1}\W)+r_{k+1}\\
=&x\T P_kx+\min_{\gamma^\theta_k}\E\Big[\Delta_k\T N_k\Delta_k+\theta_k\T \Lambda+C_{k+1}\Big]+r_k\\
=&x\T P_kx+C_k+r_k.
\end{align*}
Thus, the value function is indeed of the form \eqref{E:hypoV}, and the optimal control at time $k=0,1,\ldots,T-1$ is given in \eqref{E:u*}.
\end{proof}

From Theorem \ref{T:optcont}, the optimal control is linear in  $\tilde{X}_k$. 
The optimal gain is $-L_k$, which can be computed offline without knowledge of $\gamma^{\Theta*}$. 
The effect of $\gamma^{\Theta*}$ on $\gamma^{\U*}$ is through the term $\tilde{X}_k$, which can be computed online using \eqref{E:filter}.
From \eqref{E:hypoV}, we have
\begin{align*}
V_0(X_0)=X_0\T P_0X_0+C_0+r_0.
\end{align*}
Thus,
\begin{align*}
\min_{\gamma^\U\in \Gamma^\U,\gamma^\Theta\in \Gamma^\Theta}J(\gamma^\U,\gamma^\Theta)=&\E[{V_0(X_0)}]\\=&\tr(P_0\Sigma_0)+r_0+\E[C_0],
\end{align*}
where 
\begin{align} \label{E:ck}
C_0=&\min_{\{\gamma^\theta_t\}_{t=0}^{T-1}}\E\left[\sum_{t=0}^{T-1}\Delta_t\T N_t\Delta_t+\theta_t\T \Lambda\right]\nonumber \\
=&\min_{\{\gamma^\theta_t\}_{t=0}^{T-1}}\E\left[\sum_{t=0}^{T-1}c_t(\Delta_t,\theta_t)\right],
\end{align}
where $c_t(\Delta_t,\theta_t)=\Delta_t\T N_t\Delta_t+\theta_t\T \Lambda$.

Next, we study the optimal quantizer-selection policy $\gamma^{\Theta*}.$
The key insight here is that equation \eqref{E:ck} is reminiscent of a Markov-decision-process (MDP). 
However, it is not a standard MDP. 
Unlike a standard MDP, $\Delta_{t}$ can be deterministically characterized by the information $\I^q_t$ and $\theta_{t}$.
This is due to fact that, before selecting $\theta_t$, the disturbance $W_{t-1}$ is known. 
In the following, we reduce \eqref{E:ck} to a standard MDP problem in some new state-space $\s$ and action-space $\A$.

Let us consider the state at time $t$ to be $S_t=[\Delta_{t-1}\T ,W_{t-1}\T ]\T \in \R^{2n}$. 
The action space is the set of all canonical basis vectors of $\R^M$, \eg $\A=\{b_1,b_2,\ldots,b_M\}$ where $b_i\in \R^M$ is the $i$-th basis vector.
From \eqref{E:del} it follows that the dynamics of $S_t$ is given by
\begin{align}\label{E:s}
S_{t+1}=&\begin{bmatrix}
A\Delta_{t-1}+W_{t-1}-\hat w_{t-1}(\theta_t)\nonumber\\
W_{t}
\end{bmatrix}\\&=HS_t-\begin{bmatrix}
\sum_{i=1}^M\theta^i_t\hat w_{t-1}^i\\ 0
\end{bmatrix} +\begin{bmatrix}
0\\W_{t}
\end{bmatrix}\nonumber\\
&=HS_t-\begin{bmatrix}
G_{t}(S_{t})\theta_t\\0
\end{bmatrix}+\begin{bmatrix}
0\\W_t
\end{bmatrix}\nonumber\\
&\triangleq f(t,S_t,\theta_t,W_t),
\end{align}
where $H=\begin{bmatrix}
A~~&{I}\\{0}~~&{0}
\end{bmatrix}$, $G_{t}(S_t)=[\hat w_{t-1}^1, \hat w_{t-1}^2,\ldots,\hat w_{t-1}^M]$. $\hat w_{t-1}^i=\E[W_{t-1}~|~g_i(W_{t-1})]$ is a function of the state $S_t$.
The initial condition is
\begin{align*}
 S_0=\begin{bmatrix}
0\\ W_{-1}
\end{bmatrix}\sim \N\left(0,\begin{bmatrix}
{0} &0 \\0 &\Sigma_0
\end{bmatrix}\right),
\end{align*}
and the observation equation  is
\begin{align*}
Z_t=W_t=[{0}~~ {I}]S_t.
\end{align*} 
Therefore, $S_t$ can be constructed from the histories $\X_{t-1}$, $\U_{t-1}$ and $\Theta_{t-1}$, all of which are available in $\I^q_t$.
Also, note that $\Delta_{t}=[{I}~~ {0}]S_{t+1}$ and therefore, $c_t(\Delta_t,\theta_t)$ can be re-written as
\begin{align*}
c_t(\Delta_t,\theta_t)=\tilde{c}_t(S_{t+1},\theta_t)=S_{t+1}\T \tilde{N}_tS_{t+1}+\theta_t\T \Lambda,
\end{align*}
where $\tilde{N}_t=\begin{bmatrix}
N_t & {0}\\ {0} & {0}
\end{bmatrix} $.
Thus, \eqref{E:ck} can be cast as an MDP problem over a continuous state-space $\s=\R^{2n}$ and finite action space $\A$.
Specifically,
we have the following theorem.

 
\begin{thm} \label{T:optquant}
The optimal quantizer selection can be found by solving the  MDP $(\s,\A,{\sf{P}},\tilde{c})$ in \eqref{E:mdp} with state-space $\s= \R^{2n}$ and action-space $ \A=\{b_1,b_2,\ldots,b_M\}$,
\begin{align}\label{E:mdp}
&\min_{\gamma^\Theta\in \Gamma^\Theta}\E\left[\sum_{t=0}^{T-1}\tilde c({S_{t+1}},\theta_t)\right],\\ \nonumber
\text{s.t.} ~~~~& S_{t+1}=f(t,S_t,\theta_t,W_t),
\end{align}
with corresponding transition probabilities  ${\sf P}(S_{t+1}|S_t,\theta_t)\sim \N(\mu(S_t,\theta_t),\Sigma_t)$ where $\mu(S_t,\theta_t)=HS_t-\begin{bmatrix}
G_{t}(S_{t})\theta_t\\0
\end{bmatrix}$ and $\Sigma_t=\begin{bmatrix}
{0} & {0}\\ {0} & \W
\end{bmatrix}$.\\
\end{thm}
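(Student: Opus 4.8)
The plan is to observe that, once Theorem~\ref{T:optcont} has been applied, the only part of the cost still under the control of the quantizer-selector is the term $C_0$ in \eqref{E:ck}, since $\tr(P_0\Sigma_0)$ and $r_0$ are policy-independent; hence it suffices to recognize \eqref{E:ck} as the finite-horizon MDP \eqref{E:mdp}. I would organize the argument around three claims: (i) the augmented process $S_t=[\Delta_{t-1}\T,W_{t-1}\T]\T$ is a controlled Markov chain whose one-step law is determined by $(S_t,\theta_t)$ alone; (ii) the running cost $\Delta_t\T N_t\Delta_t+\theta_t\T\Lambda$ can be rewritten as $\tilde c(S_{t+1},\theta_t)$; and (iii) the transition kernel is the Gaussian stated in the theorem.

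For (i) and (iii), I would start from the error recursion \eqref{E:del}, adjoin the trivial recursion on $W_t$, and read off \eqref{E:s}, $S_{t+1}=f(t,S_t,\theta_t,W_t)$. The point to check carefully is that $G_t(S_t)=[\hat w_{t-1}^1,\dots,\hat w_{t-1}^M]$ is a genuine measurable function of the state: each $\hat w_{t-1}^i=\E[W_{t-1}\mid g_i(W_{t-1})]$ is piecewise constant in $W_{t-1}$, and $W_{t-1}$ is exactly the lower block of $S_t$. Next I would note that $\theta_t=\gamma^\theta_t(\I^q_t)$ is fixed before $W_t$ is realized (cf.\ the decision timeline ending in $\cdots\to X_{t+1}\to\I^q_{t+1}$), while $W_t$ is independent of $\I^q_t$, because under any fixed policy pair the entire history up to time $t$ is a deterministic function of $W_{-1},\dots,W_{t-1}$, which are independent of $W_t$. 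Hence, conditioned on $(S_t,\theta_t)$, $S_{t+1}$ is an affine image of $W_t\sim\N(0,\W)$ independent of everything else in the past; this gives both the Markov property and the kernel ${\sf P}(S_{t+1}\mid S_t,\theta_t)=\N(\mu(S_t,\theta_t),\Sigma_t)$ with $\mu$, $\Sigma_t$ as stated, the degenerate top-left block of $\Sigma_t$ simply recording that the $\Delta$-block of $S_{t+1}$ is a deterministic function of $(S_t,\theta_t)$.

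For (ii), since $\Delta_t=[\,I~~0\,]S_{t+1}$, I would substitute $\tilde N_t=\mathrm{diag}(N_t,0)\succeq 0$ to get $\Delta_t\T N_t\Delta_t+\theta_t\T\Lambda=S_{t+1}\T\tilde N_tS_{t+1}+\theta_t\T\Lambda=\tilde c(S_{t+1},\theta_t)$; plugging this into \eqref{E:ck} gives $C_0=\min_{\gamma^\Theta\in\Gamma^\Theta}\E[\sum_{t=0}^{T-1}\tilde c(S_{t+1},\theta_t)]$ subject to \eqref{E:s}, with $S_0\sim\N(0,\mathrm{diag}(0,\Sigma_0))$ inherited from $\Delta_{-1}=0$ and $W_{-1}\sim\N(0,\Sigma_0)$ --- that is, \eqref{E:mdp} (noting that the constraint $\theta_t\in\{0,1\}^M$, $\sum_i\theta_t^i=1$ is precisely $\theta_t\in\A$). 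To finish, I would record that $S_t$ is $\I^q_t$-measurable, so $S_t$-feedback policies are admissible, and, conversely, that restricting to $S_t$-feedback costs nothing because both the kernel and the stage cost depend on the history only through $(S_t,\theta_t)$; the finite-horizon Bellman recursion for \eqref{E:mdp} then yields $\gamma^{\Theta*}$, read back through $S_t(\I^q_t)$.

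The hard part is this last reduction: showing that the information in $\I^q_t$ beyond $S_t$ cannot lower the cost. It rests on (a) the fact, established above, that $W_t$ is independent of $\I^q_t$, so $\{S_t\}$ is a controlled Markov chain even under the full information pattern, and (b) a measurable-selection step ensuring that the Bellman operator for \eqref{E:mdp} admits a measurable minimizer at each stage over the continuous state space $\s=\R^{2n}$. A minor bookkeeping check is that the rank-deficient $\Sigma_t$ and the convention $\Delta_{-1}=0$ are consistent with the $\Delta$-recursion at $t=0$, and that $\E[\sum_t\tilde c(S_{t+1},\theta_t)]$ is finite, which is immediate since each $\tilde N_t$ is a fixed positive-semidefinite matrix and the $S_t$ have finite second moments under any policy.
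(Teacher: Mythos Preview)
Your proposal is correct and follows essentially the same route as the paper: the paper's proof is a one-line remark that \eqref{E:ck} is equivalent to \eqref{E:mdp} under the dynamics \eqref{E:s}, relying entirely on the preceding discussion (the definition of $S_t$, the recursion \eqref{E:s}, and the rewriting $c_t(\Delta_t,\theta_t)=\tilde c_t(S_{t+1},\theta_t)$) that you have reconstructed. You supply considerably more rigor than the paper does --- explicitly verifying the Markov property via independence of $W_t$ from $\I^q_t$, justifying the sufficiency of $S_t$-feedback policies, and flagging the measurable-selection step --- but the underlying argument is the same reduction.
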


\begin{proof}
The proof of this Theorem follows from the fact that \eqref{E:ck} is equivalent to \eqref{E:mdp} under the dynamics of $S_t$ given in \eqref{E:s}.
\end{proof}
\vspace*{5 pt}

Despite  the quadratic nature of the cost and Gaussian distribution of the noise, a closed-form solution to the above MDP is not possible due to the non-linear dynamics  in \eqref{E:mdp}.

Due to the Markovian structure of the problem, we can restrict ourselves to the space of Markovian policies, \eg $\gamma^\theta_k:S_k\to \A$ instead of $\gamma^\theta_k:\I^q_k\to \A$. 
The space of all Markovian strategies at time $k$ (entire horizon) are denoted by $\Gamma^{\theta,\mathcal M}_k \subseteq \Gamma^\theta_k$ ($\Gamma^{\Theta, \mathcal M}\subseteq \Gamma^{\Theta}$).
%
With a slight abuse of notation, let us use $C_k(S_k)$ to denote the optimal cost-to-go from time $k$ for the MDP in \eqref{E:mdp}, \eg
\begin{align*}
C_k(S_k)=\min_{\{\gamma^\theta_t\}_{t=k}^{T-1}}\E\left[\sum_{t=k}^{T-1}\tilde c({S_{t+1}},\theta_t)~\Big|S_k\right]\\
=\min_{\gamma_k^\theta}\E\left[\tilde{c}(S_{k+1},\theta_k)+C_{k+1}(S_{k+1})~|S_k\right].
\end{align*}
The following theorem characterizes the structure of $C_k(s)$ for all $s\in \R^{2n}$ and $k=0,1,\ldots,T-1$.\\

\begin{thm} \label{T:optmdp}
For each $k=0,1,\ldots,T-1$, there exist a matrix $\Phi_k\in \R^{2n\times 2n}$ and $M$ functions $\{\psi^1_k(\cdot),\psi^2_k(\cdot),\ldots,\psi^M_k(\cdot)\}$, where $\psi_k^i:\R^{2n}\to \R$ for all $i=1,2,\ldots,M$, such that
\begin{align*}
C_k(S_k)=S_k\T \Phi_kS_k+\min_{i=1,\ldots,M}\{\psi^i_k(S_k)\}.
\end{align*} 
Moreover, for each $k$, the set $\R^{2n}$ can be partitioned into $M$ disjoint regions $\{\mathcal{R}^1_k,\mathcal{R}^2_k,\ldots,\mathcal{R}^M_k\}$, such that 
\begin{align*}
\theta^{*}_k=\gamma^{\theta*}_k(S_k)=[1_{\mathcal{R}^1_k}(S_k),\ldots,1_{\mathcal{R}^M_k}(S_k)]\T.
\end{align*} 
\end{thm}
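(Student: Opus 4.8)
Proof proposal for Theorem~\ref{T:optmdp}.

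The plan is to prove the claimed form of $C_k$ by backward induction on $k$, using the dynamic-programming recursion
\[
C_k(S_k)=\min_{\gamma^\theta_k}\E\!\left[\tilde{c}(S_{k+1},\theta_k)+C_{k+1}(S_{k+1})\,\middle|\,S_k\right],
\]
together with the reduction to Markovian policies $\gamma^\theta_k:\s\to\A$ already justified above. Because $\A=\{b_1,\ldots,b_M\}$ is finite and the cost-to-go is Markovian, for each fixed value $S_k=s$ the inner minimization collapses to a pointwise minimum over the $M$ candidate actions $\theta_k=b_j$. I would take $C_T\equiv 0$ as the base case (the MDP cost in \eqref{E:mdp} carries no terminal term), so that $\Phi_T=0$ and $\psi^i_T\equiv 0$ trivially realize the asserted structure, and then run the induction from $k=T$ down to $k=0$.

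For the inductive step, assume $C_{k+1}(s)=s\T\Phi_{k+1}s+\min_i\psi^i_{k+1}(s)$ with $\Phi_{k+1}$ symmetric (and, say, positive semidefinite). Fix $S_k=s$ and a candidate action $b_j$. By Theorem~\ref{T:optquant}, conditioned on $(S_k,\theta_k=b_j)$ one has $S_{k+1}=\mu(s,b_j)+\eta$ with $\eta\sim\N(0,\Sigma_k)$, where $\mu(s,b_j)=Hs-e_j(s)$ and $e_j(s)\triangleq\big[(\hat w^j_{k-1})\T,\,0\big]\T$ is a deterministic (piecewise-constant) function of $s$ through its second block $W_{k-1}$. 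Writing $M_k\triangleq\tilde{N}_k+\Phi_{k+1}$, one computes
\begin{align*}
\E\!\left[S_{k+1}\T M_k S_{k+1}\,\middle|\,s,b_j\right]
&= s\T H\T M_k H s - 2\,e_j(s)\T M_k H s\\
&\quad + e_j(s)\T M_k e_j(s) + \tr(M_k\Sigma_k).
\end{align*}
The term $s\T H\T M_k H s$ is the only one quadratic in $s$, and crucially it does not depend on $j$ — this is exactly where the affine-in-$S_k$ structure of the dynamics \eqref{E:s} (the action enters only through the additive drift $e_j$, and $\Sigma_k$ is action-independent) is used. Setting
\[
\Phi_k\triangleq H\T(\tilde{N}_k+\Phi_{k+1})H,
\]
\begin{align*}
\psi^j_k(s)&\triangleq -2\,e_j(s)\T M_k H s + e_j(s)\T M_k e_j(s) + \tr(M_k\Sigma_k)\\
&\quad + \lambda_j + \E\!\left[\min_i\psi^i_{k+1}(S_{k+1})\,\middle|\,S_k=s,\theta_k=b_j\right],
\end{align*}
where the last term is a Gaussian integral of $\min_i\psi^i_{k+1}$ over $\N(\mu(s,b_j),\Sigma_k)$ and hence a bona fide function of $s$ and $j$, one obtains $C_k(s)=s\T\Phi_k s+\min_j\psi^j_k(s)$, with $\Phi_k$ symmetric (and PSD since $\tilde{N}_k\succeq 0$), closing the induction and proving the first assertion.

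For the second assertion, the optimal action at $S_k=s$ is any $b_{j^\star}$ with $j^\star\in\arg\min_j\psi^j_k(s)$; breaking ties by smallest index, define $\mathcal R^j_k\triangleq\{s\in\R^{2n}:\psi^j_k(s)<\psi^i_k(s)\ \forall\,i<j,\ \psi^j_k(s)\le\psi^i_k(s)\ \forall\,i>j\}$. These $M$ sets are pairwise disjoint and cover $\R^{2n}$, and by construction the optimal Markovian quantizer-selection policy is $\theta^*_k=\gamma^{\theta*}_k(S_k)=[\,1_{\mathcal R^1_k}(S_k),\ldots,1_{\mathcal R^M_k}(S_k)\,]\T$, which is exactly the stated form.

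The main subtlety — rather than a genuine obstacle — is twofold: (i) one must check that the quadratic coefficient $\Phi_k$ comes out \emph{independent of the action}, which hinges on $S_{k+1}$ being affine in $S_k$ with the action restricted to the additive term and on the action-independence of the noise covariance; and (ii) the functions $\psi^i_k$ cannot in general be written in closed form, since each involves a Gaussian expectation of $\min_i\psi^i_{k+1}$, but the theorem only claims their existence as measurable functions, so this is a caveat already flagged after Theorem~\ref{T:optquant} and not an impediment to the argument. A minor bookkeeping point is the tie-breaking rule needed to make the regions $\mathcal R^j_k$ genuinely disjoint.
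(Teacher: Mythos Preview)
Your argument is correct and follows essentially the same backward-induction route as the paper: you identify the action-independent quadratic part $\Phi_k=H\T(\tilde N_k+\Phi_{k+1})H$ and push everything else into the $\psi^j_k$, exactly as the paper does (up to an index shift, since the paper anchors its base case at $k=T-1$ rather than your trivial $C_T\equiv 0$). The only presentational difference is that the paper isolates the step ``$\E[\min_i\psi^i_{k+1}(S_{k+1})\mid S_k=s,\theta_k=b_j]$ is a function of $\mu(s,b_j)$'' as a separate lemma (their Lemma~\ref{L:1}), whereas you dispatch it in one line as a Gaussian integral; and the paper uses a set-subtraction tie-breaker for the $\mathcal R^j_k$ where you use smallest index --- both are equally valid.
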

\vspace*{12 pt}
\begin{proof}
The proof is provided in the Appendix.
\end{proof}

\vspace*{12 pt}
Theorem \ref{T:optmdp}  characterizes the structure of the value function associated with the quantizer-selection problem.
 However, computing the expressions for $\psi^i_k(S_k)$ is non-trivial and we need to seek an approximation. 
 Alternatively, instead of approximating $\psi^i_k(\cdot)$, one may directly approximate $C_k(\cdot)$ itself. 
In general, finding the best approximation of the value function still remains a challenging problem, and hence characterizing the best approximation for $C_k(\cdot)$ or $\psi^i_k(\cdot)$ is beyond the scope of this work.

 The optimal controller is fully characterized by the Riccati equations provided in Theorem \ref{T:optcont}, which can be computed offline. On the other hand, the optimal quantization scheme is characterized by the MDP presented in Theorem \ref{T:optmdp}.\\
 
 \begin{corr}
 Under the quantized (noise) feedback structure \eqref{E:measure}, the optimal controller is of \textit{certainty-equivalence} type. The optimal controller and optimal quantization selection problem can be decoupled and solved independently.
 \end{corr}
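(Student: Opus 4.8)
The plan is to deduce the Corollary directly by assembling Theorems~\ref{T:optcont}--\ref{T:optmdp} together with the cost decomposition already established; there is no genuinely new computation, only a careful reading of what has been proved. I would proceed in three short steps. First, for the certainty-equivalence claim, I would recall that Theorem~\ref{T:optcont} gives $U_k^*=-L_k\E[X_k\mid\I^c_k]=-L_k\tilde X_k$ with $L_k$, $P_k$ generated by the backward Riccati recursion \eqref{E:eqLk}--\eqref{E:eqPk}, and then point out that this is exactly the gain sequence of the \emph{deterministic}, perfectly-observed linear-quadratic regulator for $(A,B,Q_1,Q_2,R)$. Hence the optimal policy is obtained from the deterministic optimal state-feedback law by substituting, for the unavailable state $X_k$, its conditional mean $\tilde X_k=\E[X_k\mid\I^c_k]$ — i.e., the certainty-equivalence structure. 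I would also note, from \eqref{E:filter}, that the quantizer selection enters the filter only through the additive term $\hat w_{t-1}(\theta_t)$ and therefore never alters the gain $-L_k$ itself.

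Second, for the decoupling, I would invoke the observation recorded just after \eqref{E:del}: the estimation error $\Delta_t$ evolves by \eqref{E:del} and is a function of $\Theta_t$ and the noise realizations only, never of $\gamma^\U$ — precisely because it is the innovation/noise $W_{t-1}$ that is quantized rather than the state. Consequently all of $L_k,P_k,N_k,r_k$ in Theorem~\ref{T:optcont} depend on the plant and cost data alone, can be computed offline, and are independent of the quantizer set $\{\Q^i,\lambda_i\}$ and of $\gamma^{\Theta*}$; while the quantizer-selection cost $\E[C_0]$ in \eqref{E:ck} depends only on $\{N_t\}$, $\Lambda$, and the law of $\Delta_t$, hence on the quantizer data and $\gamma^\Theta$ but never on $\gamma^\U$. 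Therefore the joint optimum splits,
\begin{align*}
\min_{\gamma^\U\in\Gamma^\U,\,\gamma^\Theta\in\Gamma^\Theta}J(\gamma^\U,\gamma^\Theta)=\underbrace{\tr(P_0\Sigma_0)+r_0}_{\text{plant/controller only}}+\underbrace{\E[C_0]}_{\text{quantizer selection only}},
\end{align*}
and the two minimizations are carried out independently.

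Third, I would match the second term to Theorem~\ref{T:optmdp}: minimizing $\E[C_0]$ over $\gamma^\Theta$ is exactly the MDP $(\s,\A,{\sf P},\tilde c)$ of \eqref{E:mdp}, whose optimal policy has the region-partition form stated there, whereas the first term carries no decision variable and the controller is pinned to the certainty-equivalence law of Theorem~\ref{T:optcont}. The one place that genuinely needs care — and the point at which the analogous statement fails for state-quantization schemes such as \cite{williamson1989optimal,liu1992optimal}, where the separation principle does not hold — is verifying that $\Delta_t$ is truly control-free; once \eqref{E:del} is in hand this is immediate, so I expect no real obstacle and the rest is bookkeeping.
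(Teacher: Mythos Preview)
Your proposal is correct and mirrors the paper's approach exactly: the Corollary is stated without a separate proof, as an immediate consequence of Theorem~\ref{T:optcont} (certainty-equivalence form of $U_k^*$ with Riccati gains independent of the quantizer data), the control-free error dynamics \eqref{E:del}, the additive cost decomposition $\tr(P_0\Sigma_0)+r_0+\E[C_0]$ derived just before \eqref{E:ck}, and Theorems~\ref{T:optquant}--\ref{T:optmdp} (the residual quantizer-selection problem is an MDP in $S_t$ alone). Your three-step outline is precisely the bookkeeping the paper leaves implicit.
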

 
 \subsection{Quantized Measurement Quantizer Selection} \label{S:openloop}

In the previous section we analyzed the case when the quantizer selector had access to the state observation $X_t$ for all time $t$ and the optimal quantizer selection strategy $\gamma^{\theta*}_t(\cdot)$ was constructed based on the information $\I_t^q$. 
In this section we will consider the case when the decision-maker that performs the quantizer selection does not have access to the state $X_t$,
but  rather it receives the same quantized measurement \eqref{E:measure}  that the controller also receives. 
Specifically, the information set  $\I^q_t$ in this case is $\I^q_t = \{\Y_{t-1},\U_{t-1},\Theta_{t-1}\}$. 
The information of the controller remains the same as in the preceding analysis. 
 We have $\I^q_0=\varnothing$, $\I^c_t=\I^q_t\cup\{Y_t,\theta_t\}$, $\I^q_{t+1}=\I^c_t\cup \{U_t\}$. A schematic diagram showing the interaction of different components is provided in Figure \ref{Fig:schematic-open}.
 \begin{figure}
\begin{tikzpicture}[scale=0.9, every node/.style={transform shape}]
\centering
\draw[fill, color=gray!30] (2.75,-2.75) rectangle +(6,1.7);
\node[anchor=north] at (6.5,-1) {\scriptsize{Set of Quantizers}};
\draw[] (0,0) rectangle +(2,1);
\node at (1,.5) {Controller};
\draw[] (3,0) rectangle +(2,1);
\node at (4,.5) {Plant};
\draw[] (6,1.5) rectangle +(3,1);
\node at (7.5,2) {Quantizer Selector};
\draw[] (6,0) rectangle +(3.2,1);
\node at (7.6,.5) {Innovation extraction};
\draw[] (3.5,-3) --(8,-3);
\draw[fill=white] (3,-2.5) rectangle +(1,1);
\node at (3.5, -2) {$g_1(\cdot)$}; 
\draw[fill=white] (7.5,-2.5) rectangle +(1,1);
\node at (8, -2) {$g_M(\cdot)$}; 
\draw[fill=white] (4.5,-2.5) rectangle +(1,1);
\node at (5, -2) {$g_2(\cdot)$}; 
\draw[thick, dotted] (6,-2) -- (7,-2); 
\draw[->] (6,-3)-- (6, -3.5) -- (1,-3.5)-- (1,0);
\draw[->] (2,.5) --(3,.5);
\node[anchor=south] at (2.5,.5) {$U_t$};
\draw[->] (5,.5) --(6,.5);
\node[anchor=south] at (5.5,.5) {$X_t$};
\draw (3.5,-2.5)-- (3.5,-3);
\draw (5,-2.5)-- (5,-3);
\draw (8,-2.5)-- (8,-3);
\draw (3.5,-1.5)-- (3.5,-1);
\draw (5,-1.5)-- (5,-1);
\draw (8,-1.5)-- (8,-1);
\draw[->] (7.5, 0)--(7.5, -.6) -- (8,-1);
\node[anchor =west] at (7.5, -0.3) {$W_{t-1}$};
\node[anchor=west] at (3,-3.75) {Quantized Signal $\hat{w}_{t-1}(\theta_t)$};
\draw[->] (9,2)--(9.5,2)--(9.5,-.7)--(7.75,-.7);
\node[anchor=south] at (9,-.7) {$\theta_t$};
\draw[->] (6, -3.5) -- (9.75,-3.5)-- (9.75,2.25)--(9,2.25);
\end{tikzpicture}
\caption{Schematic diagram of the \textit{quantized measurement quantizer selection} system, 
where the gray block contains the set of quantizers and the desirable quantizer ($g_i$) is selected by the quantizer selector variable $\theta$.} \label{Fig:schematic-open}
\end{figure} 

In order to solve the optimization problem under this information structure, let us consider the value function 
\begin{align*}
V_k(\I^q_k)=\min_{\{\gamma^u_t\}_{t=k}^{T-1},\{\gamma^\theta_t\}_{t=k}^{T-1}}&\E\Big[\sum_{t=k}^{T-1}(X_t\T Q_1X_t+U_t\T RU_t+\theta_t\T \Lambda) \nonumber\\&+X_T\T Q_2X_T~\big|\I^q_k \Big],
\end{align*}
where it is implicitly assumed that $~U_t=\gamma^u_t(\I^c_t), \theta_t=\gamma^\theta_t(\I^q_t)$, for all $t= k,k+1,\ldots,T-1$. 
With this definition  of the value function, we have $\min_{\gamma^\U\in \Gamma^\U,\gamma^\Theta\in\Gamma^\Theta}J(\gamma^\U,\gamma^\Theta)=\E[V_0(\I^q_0)]$.
Using the fact that $\I^c_t\supseteq \I^q_t$, we may write
\begin{align*}
V_k(\I^q_t)=\min_{\{\gamma^u_t\}_{t=k}^{T-1},\{\gamma^\theta_t\}_{t=k}^{T-1}}&\E\Big[\E\Big[\sum_{t=k}^{T-1}(X_t\T Q_1X_t+U_t\T RU_t+\theta_t\T \Lambda) \nonumber\\&+X_T\T Q_2X_T~\big|\I^c_k\Big]~|~\I^q_k \Big].
\end{align*}
Using the dynamic programming principle, we can write, equivalently,
\begin{align*}
V_k(\I^q_k)=\min_{\gamma^u_k,\gamma^\theta_k}&\E\Big[\E\Big[(X_k\T Q_1X_k+U_k\T RU_k+\theta_k\T \Lambda) \nonumber\\&+V_{k+1}(\I^q_{k+1})~\big|\I^c_k\Big]~|~\I^q_k \Big].
\end{align*}
Assume that $V_k(\I^q_k)$ has the form
\begin{align} \label{E:valueopenloop}
V_k(\I^q_k)=\E[X_k\T P_kX_k+\Delta_{k-1}\T {\Pi_k}\Delta_{k-1}~|~\I^q_k]+\eta_k,
\end{align}
where $\eta_k$ and $ \Pi_k$ do not depend on the history $\U_{k-1}$ and $\Theta_{k-1}$. In \eqref{E:valueopenloop} $P_k$ is the Riccati equation given in Theorem \ref{T:optcont}, and $\Pi_k,\eta_k$ satisfy the equations
\begin{subequations}
\begin{align}
\Pi_k&=A\T (\Pi_{k+1}+N_k)A, \label{E:PI}\\
\Pi_T&={0},\\
\eta_k&=\eta_{k+1}+\tr(P_{k+1}\W) \label{E:r}\\&~~+\min_{\gamma^\theta_k}\{\tr\Big(\big(\Pi_{k+1}+N_k\big)\big(\W-F_{k-1}(\theta_k)\big)\Big)+\theta_k\T \Lambda\},\nonumber\\
\eta_T&=0,\\
N_k&=L_k\T (R+B\T P_{k+1}B)L_k, \nonumber
\end{align}
\end{subequations}
where $F_{t-1}(\theta_t) \triangleq  \text{Cov}(\hat{w}_{t-1}~|\I^q_{t})$.

Let us compute first $V_{T-1}(\I^q_{T-1})$ to obtain
\begin{align*}
V_{T-1}=&\min_{\gamma^u_{T-1},\gamma^\theta_{T-1}}\E\Big[X_{T-1}\T Q_1X_{T-1}+U_{T-1}\T RU_{T-1}\\&+(AX_{T-1}+BU_{T-1})\T Q_2(AX_{T-1}+BU_{T-1})\\&+\theta_{T-1}\T \Lambda~|\I^q_{T-1}\Big]+\tr(Q_2\W)\\
=&\min_{\gamma^u_{T-1},\gamma^\theta_{T-1}}\E\Big[X_{T-1}\T P_{T-1}X_{T-1}+\theta_{T-1}\T \Lambda\\&+\|U_{T-1}+L_{T-1}X_{T-1}\|^2_{(R+B\T Q_2B)}~|\I^q_{T-1}\Big]\\&+\tr(Q_2\W).
\end{align*}

From the above expression, the optimal $\gamma^{u*}_{T-1}(\I^c_{T-1})$ is given by 
\begin{align*}
U^*_{T-1}=\gamma^{u*}_{T-1}(\I^c_{T-1})=-L_{T-1}\E[{X}_{T-1}|\I^c_{T-1}].
\end{align*}
Thus,
\begin{align*}
V_{T-1}(\I^q_{T-1})=&\E\left[X_{T-1}\T P_{T-1}X_{T-1}~|\I^q_{T-1}\right]+\tr(Q_2\W)\\
+&\min_{\gamma^\theta_{T-1}}\E\left[\Delta_{T-1}\T N_{T-1}\Delta_{T-1}+\theta\T _{T-1}\Lambda~|\I^q_{T-1}\right].
\end{align*}

Due to the nested information structure $\I^c_{T-1}\supseteq \I^q_{T-1}$, it follows that
 $$\E[\Delta_{T-1}|\I^q_{T-1}]=\E[X_{T-1}- E[X_{T-1}|~\I^c_{T-1}]~|\I^q_{T-1}]=0.$$
Using \eqref{E:del}, we have $\Delta_{T-1}=A\Delta_{T-2}+W_{T-2}-\hat{w}_{T-2}$. It can be verified that 
\begin{align*}
&\E[\Delta_{T-1}\T N_{T-1}\Delta_{T-1}|\I^q_{T-1}]=\E[\Delta_{T-2}A\T N_{T-1}A\Delta_{T-2}|\I^q_{T-1}]\\&~~~~~~~~~~~~~+\tr(N_{T-1}\W)-\E[\hat{w}_{T-2}\T N_{T-1}\hat{w}_{T-2}|~\I^q_{T-1}].
\end{align*}
Since $\Delta_{T-2}$ does not depend on $\theta_{T-1}$, we have
\begin{align*}
V_{T-1}(\I^q_{T-1})&=\E\left[X_{T-1}\T P_{T-1}X_{T-1}~|\I^q_{T-1}\right]\\&+\tr((Q_2+N_{T-1})\W)\\
&+\E[\Delta_{T-2}A\T N_{T-1}A\Delta_{T-2}|\I^q_{T-1}]\\
&+\min_{\gamma^\theta_{T-1}}\E\left[-\hat{w}_{T-2}\T N_{T-1}\hat{w}_{T-2}+\theta_{T-1}\T \Lambda~|\I^q_{T-1}\right]\\
&=\E\left[X_{T-1}\T P_{T-1}X_{T-1}~|\I^q_{T-1}\right]+\tr(Q_2\W)\\
&+\E[\Delta_{T-2}A\T N_{T-1}A\Delta_{T-2}|\I^q_{T-1}]\\
&+\min_{\gamma^\theta_{T-1}}\{\theta_{T-1}^{\T}\Lambda+\tr(N_{T-1}\big(\W-F_{T-2}(\theta_{T-1})\big)\},
\end{align*}
where we have used the fact that\footnote{
\begin{align*} 
\E[\hat{w}_{t-1}|\I^q_{t}]&\overset{(a)}{=} \E[\hat{w}_{t-1}]=0,\\
\text{Cov}(\hat{w}_{t-1}|\I^q_{t})&=\E\left[\left(\sum_{i=1}^M\theta^i_{t}\hat w^i_{t-1}\right)\left(\sum_{i=1}^M\theta^i_{t}\hat w^i_{t-1}\right)\T|\I^q_t\right]\\
&\overset{(b)}{=}\sum_{i=1}^M\theta^i_{t}\E[\hat w^i_{t-1}\hat{w}^{i\T}_{t-1}|\I^q_t]\overset{(a)}{=}\sum_{i=1}^M\theta^i_{t}\E[\hat w^i_{t-1}\hat{ w}^{i\T}_{t-1}]
\end{align*}
where (a) follows from the fact that the random variable $\hat w^i_{t-1}$ is independent of the $\sigma$-field generated by $\I^q_t$ since the former is a function of $W_{t-1}$ and the latter is a function of $W_{-1},\ldots,W_{t-2}$, and (b) follows from the fact that $\theta_t$ is $\I^q_t$ measurable.
} 
$F_{t-1}(\theta_t) \triangleq \text{Cov}(\hat{w}_{t-1}~|\I^q_{t})=\sum_{i=1}^M\theta^i_{t}\E[\hat w^i_{t-1}\hat{ w}^{i\T}_{t-1}]  $ where 
\begin{equation*}
F_{t-1}(\theta_{t})=\sum_{i=1}^M\theta^i_tF^i_{t-1},
\end{equation*}
and
\begin{align*}
F^i_{t-1} &\triangleq \E[\hat w^i_{t-1}\hat{ w}^{i\T}_{t-1}] \\
=\sum_{i=1}^{\ell_i}&\mathsf{P}(W_{t-1}\in \p^i_j)\E[W_{t-1}|W_{t-1}\in \p^i_j]\E[W_{t-1}|W_{t-1}\in \p^i_j]\T. 
\end{align*}


Using the definitions of $\Pi_k$ and $\eta_k$,
$V_{T-1}(\I^q_{T-1})$ can be  rewritten as:
\begin{align*}
V_{T-1}(\I^q_{T-1})&=\E[\|X_{T-1}\|^2_{P_{T-1}}+\|\Delta_{T-2}\|^2_{\Pi_{T-1}}|\I^q_{T-1}]\\&+\eta_{T-1}.
\end{align*}
Note that $\eta_{T-1}$ does not depend on the history $\U_{T-2}$ and $\Theta_{T-2}$. Therefore, the hypothesis about $V_k(\I^q_k)$ is true at $k=T-1$.

Now we assume that the same is  true for $V_{k+1}(\I^q_{k+1})$. It follows that
\begin{align*}
V_k(\I^q_k)=\min_{\gamma^u_k,\gamma^\theta_k}\E\Big[&(X_k\T Q_1X_k+U_k\T RU_k+\theta_k\T \Lambda) \nonumber\\&+V_{k+1}(\I^q_{k+1})~|~\I^q_k \Big]\\
=\min_{\gamma^u_k,\gamma^\theta_k}\E\Big[&(X_t\T Q_1X_t+U_t\T RU_t+\theta_t\T \Lambda) \nonumber\\& +\|X_{k+1}\|^2_{P_{k+1}}+\|\Delta_{k}\|^2_{\Pi_{k+1}}~|~\I^q_k\Big]+\eta_{k+1}.
\end{align*}
In the above equation $\eta_{k+1}$ is taken out of the minimization since it does not depend on $\U_k$ and $\Theta_k$. Using completion of squares yields
\begin{align*}
V_k(\I^q_k)&=\min_{\gamma^u_k,\gamma^\theta_k}\E\Big[X_k\T P_kX_k+\|U_k+L_kX_k\|^2_{R+B\T P_{k+1}B}\\
&+\|\Delta_{k}\|^2_{\Pi_{k+1}}+\theta_t\T \Lambda~|~\I^q_k\Big]+\eta_{k+1}+\tr(P_{k+1}\W).
\end{align*}
Clearly, $U_k^*=\gamma^{u*}_k(\I^c_k)=-L_k\tilde{X}_k$, and hence
\begin{align*}
V_k(\I^q_k)&=\min_{\gamma^\theta_k}\E\Big[\|\Delta_{k}\|^2_{\Pi_{k+1}+N_k}+\theta_t\T \Lambda~|~\I^q_k\Big]\\
&+\E\Big[X_k\T P_kX_k~|~\I^q_k\Big]+\eta_{k+1}+\tr(P_{k+1}\W),
\end{align*}
where $N_k=L_k\T (R+B\T P_{k+1}B)L_k$.
Using the fact
\begin{align*}
\E\Big[\|\Delta_{k}\|^2_{\Pi_{k+1}+N_k}~|~\I^q_k\Big]&=\E\Big[\|\Delta_{k-1}\|^2_{A\T (\Pi_{k+1}+N_k)A}~|~\I^q_k\Big]\\
&+\tr((\Pi_{k+1}+N_k)(\W-F_{k-1}(\theta_k))),
\end{align*}
we obtain
\begin{align*}
V_k(\I^q_k)&=\min_{\gamma^\theta_k}\{\tr\Big(\big(\Pi_{k+1}+N_k\big)\big(\W-F_{k-1}(\theta_k)\big)\Big)+\theta_k\T \Lambda\}\\
&+\E\Big[X_k\T P_kX_k+\|\Delta_{k-1}\|^2_{A\T (\Pi_{k+1}+N_k)A}~|~\I^q_k\Big]\\&+\eta_{k+1}+\tr(P_{k+1}\W),
\end{align*}
and using \eqref{E:PI} and \eqref{E:r}, we obtain
\begin{align*}
V_k(\I^q_k)&=\E[X_k\T P_kX_k+\Delta_{k-1}\T {\Pi_k}\Delta_{k-1}~|~\I^q_k]+\eta_k.
\end{align*}
Under the \textit{quantized measurement information pattern} the controller retains the same structure as for the \textit{perfect measurement information pattern}. 
However, the quantization selection strategy has changed, as expected. 
In this case, the optimal quantizer selection strategy is given by
\begin{align} \label{E:thta}
\gamma^{\theta*}_k(\I^q_k)=\arg\min_{\theta\in \A}\{&\tr\big(\Omega_kM_{k-1}(\theta)\big)+\theta\T \Lambda\}.
\end{align}
where $M_{k-1}(\theta)=\W-F_{k-1}(\theta)$ and $\Omega_k=\Pi_{k+1}+N_k$ for all $k=0,1,\ldots,T-1$.
The optimal selection of $\theta_k^*$ does neither depend on the previous choices ($\theta_1^*,\ldots,\theta^*_{k-1}$) nor depends on the future  choices $\theta_k^*,\ldots,\theta_{T-1}^*$.
The optimal quantizer at time $k$ is the one which reduces the weighted noise estimation error covariance ($M_k(\theta)$) most with the least cost ($\lambda_i$).

\begin{pr}
For all $k=0,1,\ldots,T-1$, $\Omega_k=\Upsilon_k-P_k$, where $\Upsilon_k$ satisfies the dynamics
\begin{align} \label{E:upsilon}
\Upsilon_k&=A\T\Upsilon_{k+1}A+Q_1, \\
\Upsilon_T&=Q_2. \nonumber
\end{align}

\end{pr}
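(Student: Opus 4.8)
The plan is a backward induction on $k$. The first step is to re-express the recursion \eqref{E:PI} for $\Pi_k$ directly in terms of $\Omega_k$. Since $\Pi_{k+1}=A\T(\Pi_{k+2}+N_{k+1})A=A\T\Omega_{k+1}A$ for $k\le T-2$, one obtains the one-step recursion $\Omega_k=\Pi_{k+1}+N_k=A\T\Omega_{k+1}A+N_k$, together with the terminal value $\Omega_{T-1}=\Pi_T+N_{T-1}=N_{T-1}$ (using $\Pi_T={0}$). It is convenient to adopt the convention $\Omega_T:={0}$, which is consistent both with $\Pi_T={0}$ and with the claimed identity at $k=T$, namely $\Upsilon_T-P_T=Q_2-Q_2={0}$; the recursion $\Omega_k=A\T\Omega_{k+1}A+N_k$ then holds for all $k=0,\ldots,T-1$.

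Next I would record the Riccati-type identities I will contract against. From \eqref{E:upsilon}, $A\T\Upsilon_{k+1}A=\Upsilon_k-Q_1$, and from \eqref{E:eqPk} together with \eqref{E:eqNk}, $P_k=Q_1+A\T P_{k+1}A-N_k$, i.e.\ $-A\T P_{k+1}A+N_k=Q_1-P_k$. As a direct check of the base case, for $k=T-1$ these give $\Upsilon_{T-1}=A\T Q_2A+Q_1$ and $P_{T-1}=Q_1+A\T Q_2A-N_{T-1}$, so $\Upsilon_{T-1}-P_{T-1}=N_{T-1}=\Omega_{T-1}$, as desired.

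For the inductive step, assume $\Omega_{k+1}=\Upsilon_{k+1}-P_{k+1}$. Substituting into $\Omega_k=A\T\Omega_{k+1}A+N_k$ gives $\Omega_k=A\T\Upsilon_{k+1}A-A\T P_{k+1}A+N_k$, and applying the two identities from the previous paragraph yields $\Omega_k=(\Upsilon_k-Q_1)+(Q_1-P_k)=\Upsilon_k-P_k$, which closes the induction. The mechanism is simply that $N_k$ enters the $P_k$-recursion with a minus sign and the $\Omega_k$-recursion with a plus sign, so the two $N_k$ terms cancel, while the additive $Q_1$ in the $\Upsilon_k$-recursion cancels the $Q_1$ carried by $P_k$.

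There is no real obstacle here; the only care needed is index bookkeeping at the terminal time and keeping track of which recursion ($\Pi$, $P$, or $\Upsilon$) each $A\T(\cdot)A$ term comes from. I would double-check the definition $\Omega_k=\Pi_{k+1}+N_k$ so that the shift of index in $\Pi$ is handled correctly, since that is the one place where an off-by-one error could creep in.
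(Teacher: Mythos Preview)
Your proof is correct and follows essentially the same backward-induction approach as the paper: the paper computes $\Upsilon_{k-1}-P_{k-1}=A\T(\Upsilon_k-P_k)A+N_{k-1}$ and then invokes the hypothesis together with \eqref{E:PI}, while you first isolate the equivalent recursion $\Omega_k=A\T\Omega_{k+1}A+N_k$ and then match $\Upsilon_k-P_k$ to it---the algebraic content and the cancellation of $Q_1$ and $N_k$ are identical.
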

\begin{proof}
The proposition is proved by showing that $\Upsilon_k-P_k=\Pi_{k+1}+N_k$ for all $k=0,1,\ldots,T-1$. At $k=T-1$,
\begin{align*}
\Upsilon_{T-1}=&A\T Q_2A+Q_1\\
=&A\T Q_2 A+Q_1-N_{T-1}+N_{T-1}\\
=&P_{T-1}+N_{T-1}.
\end{align*}
Therefore, $\Upsilon_{T-1}-P_{T-1}=\Pi_T+N_{T-1}$ since $\Pi_T=0$. Thus, the relationship holds for $k=T-1$. We shall use mathematical induction to prove the proposition. Let us assume that the relationship holds for some $k=1,\ldots,T-1$, then
\begin{align*}
\Upsilon_{k-1}-P_{k-1}=&A\T \Upsilon_k\A+Q_1-(Q_1+A\T P_kA-N_{k-1})\\
=&A\T (\Upsilon_k-P_k)A+N_{k-1}.
\end{align*}
Using the hypothesis that $\Upsilon_k-P_k=\Pi_{k+1}+N_k$ and \eqref{E:PI}, we obtain
\begin{align*}
\Upsilon_{k-1}-P_{k-1}=&A\T (\Pi_{k+1}+N_k)A+N_{k-1}\\
=&\Pi_k +N_{k-1}.
\end{align*}
Therefore, for all $k=0,1,\ldots,T-1$, $\Omega_k=\Upsilon_k-P_k$.
\end{proof}

The purpose of the quantizer in this case is to reduce the variance of the noise at the controller. 
The corresponding reduction in the noise covariance by selecting the $i$-th quantizer is $\W-F_{k-1}^i$. 
Equation \eqref{E:thta} also shows that the importance of the noise covariance reduction at different time instances is different. 
The weights $\Omega_k=\Pi_{k+1}+N_k$ at each time $k$ denote the expected effect that the noise would have for the remaining horizon $t=k,k+1,\ldots,T-1$.

We summarize the results for the quantized measurement optimal quantizer selection in the following theorem.\\

\begin{thm}
Given the information patterns $\I^q_k=\{\Y_{k-1},\U_{k-1},\Theta_{k-1}\}$ and $\I^c_k=\{\Y_k,\U_{k-1},\Theta_k\}$, the optimal control policy $\gamma^{u*}_k:\I^c_k\to \R^m$ is given by
\begin{align*}
U^*_k=\gamma^{u*}_k=-L_k\E\left[X_k~|~\I^c_k\right],
\end{align*}
and the optimal quantizer selection policy is given by
\begin{align*}
\theta^*_k=\gamma^{\theta*}_k(\I^q_k)=\arg\min_{\theta\in \A}\{&\tr\big(\Omega_kM_{k-1}(\theta)\big)+\theta\T \Lambda\},
\end{align*}
where $M_{k-1}(\theta)=\W-F_{k-1}(\theta)$ and $\Omega_k=\Upsilon_k-P_k$, $\Upsilon_k$ satisfies the dynamics \eqref{E:upsilon}.
\end{thm}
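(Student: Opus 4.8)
The plan is to establish both formulas by a single backward dynamic programming recursion on the value function $V_k(\I^q_k)$, along the lines already laid out above for the quantized measurement information pattern, and then to rewrite the resulting weight matrix by invoking the preceding Proposition. The structural fact that makes the recursion go through is that the information patterns are nested, $\I^q_k\subseteq\I^c_k$, so at each stage one writes $V_k(\I^q_k)=\min_{\gamma^u_k,\gamma^\theta_k}\E\big[\E[\,\cdot\mid\I^c_k]\mid\I^q_k\big]$ by the tower property, minimizes over the $\I^c_k$-measurable control $U_k$ first, and then over the $\I^q_k$-measurable selection $\theta_k$.

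First I would prove by backward induction, starting from $V_T(\I^q_T)=\E[X_T\T Q_2X_T\mid\I^q_T]$ (so that $P_T=Q_2$, $\Pi_T=0$, $\eta_T=0$), that $V_k(\I^q_k)$ has the claimed form \eqref{E:valueopenloop}. At the inductive step, substituting $X_{k+1}=AX_k+BU_k+W_k$ and completing the square in $U_k$ isolates $\|U_k+L_kX_k\|^2_{R+B\T P_{k+1}B}$; by the minimum-mean-square-estimate argument used in the proof of Theorem \ref{T:optcont}, the conditional expectation of this term is minimized by $U_k^*=-L_k\E[X_k\mid\I^c_k]$, and the Riccati recursion \eqref{E:eqLk}--\eqref{E:eqPk} collapses the remaining $X_k$-quadratic to $X_k\T P_kX_k$. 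Since $U_k^*+L_kX_k=L_k\Delta_k$, the residual control penalty is $\|\Delta_k\|^2_{N_k}$, which combines with the $\|\Delta_k\|^2_{\Pi_{k+1}}$ term inherited from $V_{k+1}$ into $\|\Delta_k\|^2_{\Pi_{k+1}+N_k}$. Using $\Delta_k=A\Delta_{k-1}+W_{k-1}-\hat w_{k-1}$ from \eqref{E:del}, together with $\E[\Delta_k\mid\I^c_k]=0$ (hence $\E[\Delta_k\mid\I^q_k]=0$) and the independence of $W_{k-1}$ and $\hat w_{k-1}$ from $\I^q_k$, this splits into a quadratic $\E[\|\Delta_{k-1}\|^2_{A\T(\Pi_{k+1}+N_k)A}\mid\I^q_k]$ carrying no dependence on $\theta_k$ (which gives the recursion $\Pi_k=A\T(\Pi_{k+1}+N_k)A$ of \eqref{E:PI}) plus a scalar $\tr\big((\Pi_{k+1}+N_k)(\W-F_{k-1}(\theta_k))\big)+\theta_k\T\Lambda$ that is the only part depending on $\theta_k$. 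Minimizing the latter over $\A$ yields \eqref{E:thta} with $\Omega_k=\Pi_{k+1}+N_k$, and folding in its optimal value recovers the $\eta_k$ recursion \eqref{E:r}, closing the induction.

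Finally, I would invoke the preceding Proposition, which states $\Omega_k=\Pi_{k+1}+N_k=\Upsilon_k-P_k$ with $\Upsilon_k$ obeying \eqref{E:upsilon}; substituting this into the optimal control and selection rules established above gives exactly the two displayed formulas. The step I expect to be the crux is the conditional covariance bookkeeping, namely $\text{Cov}(\hat w_{k-1}\mid\I^q_k)=F_{k-1}(\theta_k)=\sum_{i=1}^M\theta_k^i\,\E[\hat w_{k-1}^i(\hat w_{k-1}^i)\T]$ and, consequently, $\E[(W_{k-1}-\hat w_{k-1})(W_{k-1}-\hat w_{k-1})\T\mid\I^q_k]=\W-F_{k-1}(\theta_k)$. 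Establishing this uses that $\theta_k$ is a canonical basis vector, so the double sum collapses to a single sum; that $\theta_k$ is $\I^q_k$-measurable, so it pulls out of the conditional expectation; that each $\hat w_{k-1}^i$ is a deterministic function of $W_{k-1}$ and hence independent of $\I^q_k$ (generated by $W_{-1},\ldots,W_{k-2}$); and the tower-property identity $\E[W_{k-1}(\hat w_{k-1}^i)\T]=\E[\hat w_{k-1}^i(\hat w_{k-1}^i)\T]$, which cancels the cross terms. Once this is in place the rest is routine completion of squares, and no structural assumption on the quantizers is required.
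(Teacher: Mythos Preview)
Your proposal is correct and follows essentially the same route as the paper: the theorem is proved by the backward dynamic-programming recursion on $V_k(\I^q_k)$ already carried out in Section~\ref{S:openloop} (completion of squares for $U_k$, the split of $\|\Delta_k\|^2_{\Pi_{k+1}+N_k}$ via $\Delta_k=A\Delta_{k-1}+W_{k-1}-\hat w_{k-1}$, and the covariance identity $\operatorname{Cov}(\hat w_{k-1}\mid\I^q_k)=F_{k-1}(\theta_k)$), followed by invoking the preceding Proposition to rewrite $\Omega_k=\Pi_{k+1}+N_k$ as $\Upsilon_k-P_k$. You have also correctly singled out the covariance bookkeeping as the only non-routine step and supplied the right justification for it.
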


\begin{proof}
The proof of this theorem follows from the construction of the value function $V_k(\I^q_k)$ followed by the analysis presented in this section to show that the $\gamma^{u*}_k$ and $\gamma^{\theta*}_k$ described in the the theorem are the optimal strategies for the value function.
\end{proof}

\vspace*{12 pt}
From the expression of $\gamma^{\theta*}_k(\I^q_k)$, we notice that $\gamma^{\theta*}_k(\cdot)$ is only a function of $k$, since the computation of any of the parameters $\Omega_k,F_{k-1}$ does not require  knowledge of $\I^q_k$. Thus, one may compute $\gamma^{\theta^*}_k(\cdot)$ without having access to $\I^q_k$. Moreover, $\Pi_k,N_k,F_k$ can be computed offline and hence $\gamma^{\theta*}_k(\cdot)$ can be computed offline too. 

{Compared to the results obtained in Section \ref{S:closeloop}, the optimal quantizer selection problem under the \textit{quantized measurement information structure} does not require the solution of an MDP. 
The optimal quantizer selection strategy presented in \eqref{E:thta} is a sub-optimal strategy for the MDP \eqref{E:mdp}. 
However, this sub-optimal strategy can be used as an initial guess to solve \eqref{E:mdp} via iterative techniques such as policy iteration. }

\begin{rem}[Constrained optimization]
In the previous sections we have considered a problem  of the form:
\begin{align*}
\min_{\gamma^\U,\gamma^\Theta}\{J_{\rm{LQG}}+J_{\rm{quant}}\},
\end{align*}
However, a possibly more interesting problem is to consider a constrained optimization problem of the form
\begin{align} \label{E:cons1}
\min_{\gamma^\U,\gamma^\Theta} ~&J_{\rm{LQG}},\\ \nonumber
\text{s.t.~} &J_{\rm{quant}} \le \mathcal{B},
\end{align}
Although solving the optimization problem \eqref{E:cons1} is beyond the scope of this paper, however, the solution of this problem can be constructed by solving the family of optimization problems:
\begin{align} \label{E:weighted}
\min_{\gamma^\U,\gamma^\Theta}\{\beta J_{\rm{LQG}}+(1-\beta)J_{\rm{quant}}\},
\end{align}
for all $\beta\in [0,1]$.
\eqref{E:weighted} can be solved by the framework presented in this paper.
\end{rem}

%
%
%
%
%

\section{SIMULATION RESULTS} \label{S:simu}
\subsection{Example1: Unstable System}
Let us consider the two-dimensional (unstable) system
\begin{align*}
X_{t+1}=\begin{bmatrix}
1.01 &0.5\\ 0 &1.1
\end{bmatrix}X_t+\begin{bmatrix}
0.1 &0\\ 0 &0.15
\end{bmatrix}U_t+W_t,
\end{align*}
with initial condition $X_0~\sim\N(0,I)$ and $W_t\sim \N(0,\frac{1}{4}I)$. 
The control cost has  parameters $Q=Q_f=R=\frac{1}{2}I$. 
The time horizon was set to $T=50$.
 The simulation was performed with a  scenario of three quantizers ($\Q^1,\Q^2,\Q^3$) where $\Q^i$ has $2^i$ number of quantization levels. 
 The partitions associated with the quantizers are $\p^1=\{\R_+\times\R, \R_{< 0}\times \R\}$, $\p^2=\{\R_+\times \R_+,~\R_+\times\R_{< 0},~ \R_{<0}\times\R_+, \R_{<0}\times\R_{<0}\}$ and $\p^3=\{[0,0.5)\times\R_+, ~[0.5,\infty)\times\R_+,[0,0.5)\times\R_{<0}, ~[0.5,\infty)\times\R_{<0},[-0.5,0)\times\R_+, ~(-\infty,-0.5)\times\R_+,[-0.5,0)\times\R_{<0}, ~(-\infty,-0.5)\times\R_{<0}\}$. The costs associated with the quantizers are $\Lambda=[1 , 2, 3]\T \times 10^4$.
For this example, we consider the \textit{quantized measurement information pattern} for the quantizer selection. Therefore, the optimal quantizer at time $t$ is selected based on equation \eqref{E:thta}. The optimal selection of the quantizers are plotted in Figure \ref{F:optimal quantizer}.
\begin{center}
\begin{figure}
\includegraphics[scale=.5]{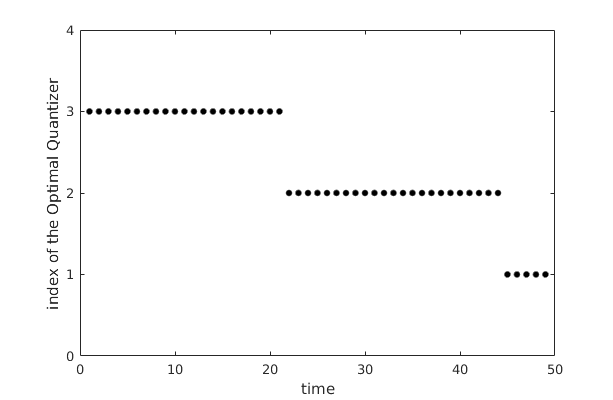}
\caption{Optimal quantizer selection over time.} \label{F:optimal quantizer}
\end{figure}
\end{center}
To characterize the utilization quotient of each quantizer, we define the variable $\rho_i(t)$ as
\begin{align*}
\rho_i(t)=\frac{\text{\# utilization of }i\text{-th quantizer up to time }t }{t}.
\end{align*}
Note that $\sum_{i=1}^3\rho_i(t)=1$ for all $t$. The optimal utilization of the quantizers are plotted in Figure \ref{F:utilization}.
\begin{figure}
\centering
\includegraphics[scale=0.75]{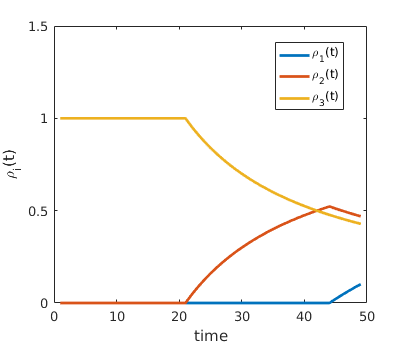}
\caption{Utilizations of different quantizers over time.} \label{F:utilization}
\end{figure}

The Pareto curve for the bi-objective optimization problem is shown in Figure \ref{F:Pareto}. As can be seen from Figure \ref{F:Pareto}, the minimum control cost  achievable for this problem is $J_{\rm{LQG}}=2.295\times 10^6$ with the maximum quantization cost $J_{\rm{quant}}=1.5\times 10^6$. On the other hand, the maximum control cost is $3.367\times 10^6$ when the quantization cost is kept at a minimum ($5 \times 10^5$).
One interesting observation for this particular problem is the steepness of the Pareto curve. 
In this study it shows that the quantization cost can be reduced drastically with very minor change in the control cost. However, after a point, a slight the reduction in the quantization cost leads to large change in the control cost.
Thus, it appears that the rate of reduction in control cost by changing the number of transmission bits is non-uniform, and furthermore, beyond a certain quantization data-rate the rate of change in control cost is negligible.
In a future study it would be interesting to study whether such behavior is fundamental to the LQG problem or only occurs when certain choices of parameters ($A,B,\Q$ etc.) are taken.
\begin{figure}
\centering
\includegraphics[scale=0.5]{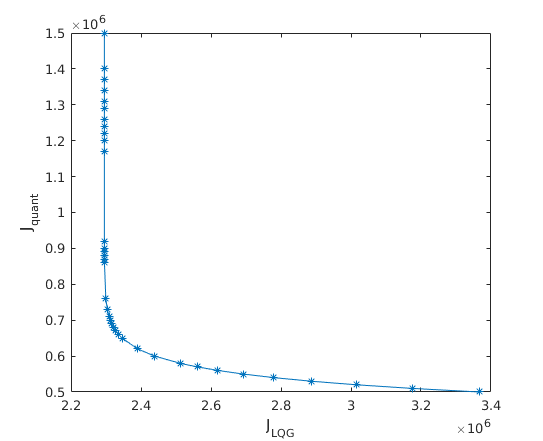}
\caption{Pareto front of the bi-objective problem.} \label{F:Pareto}
\end{figure}

\subsection{Example2: Stable System}
We also performed a similar experiment with the stable dynamics:
\begin{align*}
X_{t+1}=\begin{bmatrix}
0.9 &0.2\\ 0 &0.7
\end{bmatrix}X_t+\begin{bmatrix}
0.1 &0\\ 0 &0.15
\end{bmatrix}U_t+W_t,
\end{align*}
where the initial condition $X_0~\sim\N(0,I)$ and $W_t\sim \N(0,\frac{1}{4}I)$. All the other parameters were chosen to be exactly the same as in Example 1 except the fact that the quantizer costs are now $\Lambda=[0.03, 0.06, 0.09]\T $. In this case, we observed a similar quantizer utilization pattern (Figure \ref{F:utilization_stable}) as we observed for the unstable dynamics case. 
\begin{figure}
\centering
\includegraphics[scale=0.65]{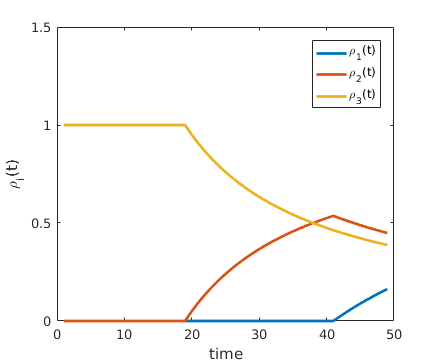}
\caption{Utilizations of different quantizers over time.} \label{F:utilization_stable}
\end{figure}
When the quantizer cost were kept to $\Lambda=[1,2,3]\T $, the optimal quantizer at any time was the 1st quantizer (the cheapest one). 
In this case, the increment in quantization cost for using a better resolution quantizer (at any time) is higher than the decrement in the control cost using the better quantizer. Thus, the quantization cost seems to be \textit{too high} for such a stable system.

Also, when the quantization cost was set to be $10^{-4}\times[1,2,3]$, the optimal quantizer choice was the 3rd quantizer (the best resolution quantizer). In this case, the decrement in the quantization cost for using a lower resolution quantizer is smaller than the increment in the control performance. 

For this example we also implement the perfect state-feedback strategy, i.e., the controllers had a perfect measurement as opposed to quantized measurements. The deviation in the state trajectories and control inputs between the optimal perfect feedback scenario and optimal quantized are shown in Figures \ref{F:state_trajectory} and \ref{F:Control_trajectory}. Combining Figures \ref{F:utilization_stable}, \ref{F:state_trajectory} and \ref{F:Control_trajectory} we notice that by optimally using the quantizers, the control and state trajectories are able to mimic the perfect-feedback trajectory very closely. Since the quantizers considered in this example have 2, 4 and 8 quantization levels, the number of bits required to transmit the measurements will be 1, 2 and 3 bits respectively. For this example, the average bit rate (bits per sample) is $(1/T) \sum_{i=0}^{T-1}\theta^i_t\log_2\ell_i =2.22$.

\begin{figure}
\centering
\includegraphics[scale=0.6]{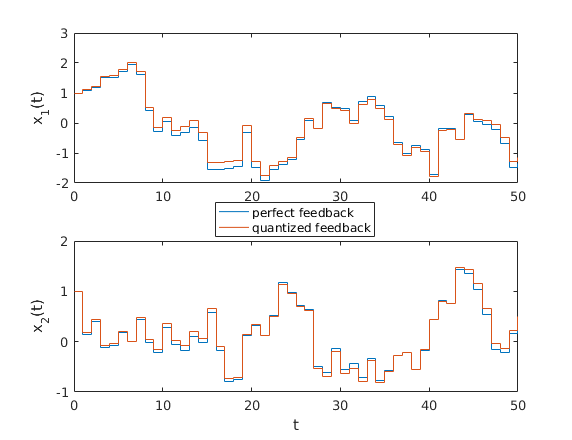}
\caption{Top: First component of the state under perfect feedback (blue) and quantized feedback (red). Down: Second component of the state under perfect feedback (blue) and quantized feedback (red).} \label{F:state_trajectory}
\end{figure}
\begin{figure}
\centering
\includegraphics[scale=0.6]{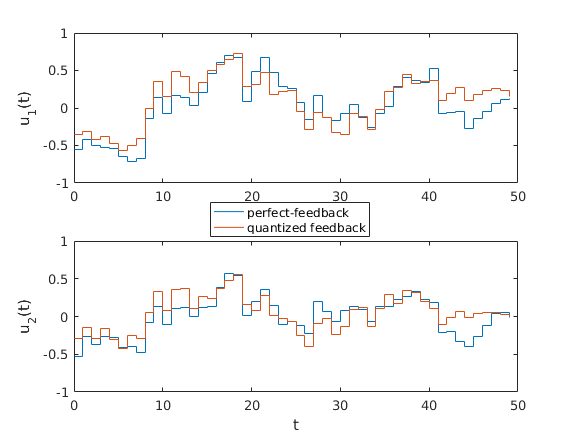}
\caption{Top: First component of control input under perfect feedback (blue) and quantized feedback (red). Down: Second component of control input under perfect feedback (blue) and quantized feedback (red).} \label{F:Control_trajectory}
\end{figure}

\section{Discussion and Extensions} \label{S:Discussion}

\subsection{Partial Noisy Observations}

In this work we have assumed that perfect state measurements are available at the sensor prior to quantization and the communication channel is error-free and not susceptible delay and distortion. 
The assumption on the availability of perfect state measurement can be dropped and the proposed framework can readily be extended to partially observed noise corrupted sensory measurements.
In order to incorporate partial observation where process noise $W_t$ is not readily computable one needs to construct a different signal from the noisy observations and quantize that specific signal for communication. 
A detailed study on partially observable systems can be found in our recent work \cite{maity2019optimal}.

\subsection{Delay and Distortion}

Two major aspects in considering a digital communication network is delay in the transmission and distortion. 
In this paper, we have not considered either of these aspects and rather considered an ideal network with zero delay and no distortion in the channel. 
Although we have not considered the effects of delay explicitly, this framework can easily be extended to the scenario where delay is present. 
In the presence of network delay, the information available at the controller will be affected since some of the measurements arrival will be delayed and hence the state estimation will be affected.
The detailed discussion on the effects of delay is beyond the scope of this work, and the interested readers are directed to our follow-up work in \cite{maity2019optimal}.

In order of discuss the effects on channel distortion, let us assume that the channel input alphabet is $\mathcal{I}=\cup_{i=1}^M\Q^i$  where channel output alphabet is $\mathcal{O}$,\footnote{
For simplicity, one may assume that $\mathcal{O}=\mathcal{I}$
} i.e., the channel accepts the input $q^i_j$ from the quantizers and maps it to one of the outputs $o\in \mathcal{O}$ based on the probability distribution $p_c(o|q^i_j)$.  
$p_c(\cdot|\cdot)$ is the channel characteristic which is known and hence, upon receiving an alphabet $o \in \mathcal{O}$, we can compute the posterior distribution $p(q^i_j|o)$ by Baye's rule. 
Thus, if at time $t$, the noise $W_{t-1}$ is quantized and the controller receives $o_t \in \mathcal{O}$, then 
\begin{align*}
\E[W_{t-1}~|~o_t]&=\E\left[\E[W_{t-1}~|~q^i_j,o_t]~|o_t\right]\\
&=\sum_{i=1}^M\sum_{j=1}^{\ell_i}\E[W_{t-1}~|~q_j^i]p(q^i_j~|~o_t).
\end{align*}

In the case of no distortion, we trivially have $o_t=q^i_k$ if the input to the channel at time $t$ was $q^i_k$, and hence $\E[W_{t-1}~|~o_t]=\E[W_{t-1}|g_i(W_{t-1})]$ as given in \eqref{E:indicatorw}.

\subsection{Choice of Quantizers}

The aim of this paper is to select the best quantizer from a given set of quantizers.
It is assumed that such a set of quantizers is designed a priori.
An interesting research topic is the design of such a set of possible quantizers. 
From the analysis in this paper, one can immediately see that the cost function is reduced to the form of 
\begin{align} \label{E:Qparameters}
J(\U^*,\Theta^*)=\E[X_0\T P_0X_0]+r_0+\E\left[\sum_{t=0}^{T-1}{\Delta_t^*}\T N_t\Delta_t^*+{\theta_t^*}\T \Lambda\right]
\end{align}
where the distribution of $\Delta^*_t$ (and hence $\E[\Delta^*_t{\Delta^*_t}\T]$) depends on the quantizer parameters. 
Thus, one can re-write~\eqref{E:Qparameters} as an explicit function of these parameters, 
and may further optimize $J(\U^*,\Theta^*)$ with respect to these parameters to find the best set of quantizers.
Although, at least in principle,  such a method to directly design the quantization set is possible,  
it can be computationally expensive.

\section{CONCLUSIONS} \label{S:conclusion}

In this work, we have considered a classical quantization-based LQG problem with a positive  quantization cost. 
The problem is to choose an optimal quantizer among a set of available quantizers that minimizes the combined cost of quantization and control performance. 
We have shown that the optimal controller exhibits a separation principle and it has a linear relationship with the estimate of the state. 
The optimal gains are found by solving the classical Riccati equation associated with the LQG problem. 
We have also shown that the optimal selection for the quantizers can be found by solving an auxiliary MDP problem that can be solved independently from the controller synthesis problem. 
A closed-form solution to the MDP problem is not available at this point.
 Instead, we provided some structural properties of the value function associated with the MDP. 
 These structural properties can be exploited for the purpose of value function estimation. 

We also considered a version of the problem, where the quantization selection is done based on limited information -- which we refer to as the \textit{quantized measurement information structure}. 
We showed that, under such an information pattern, the controller structure remains unchanged, and the optimal quantization selection can be solved offline.





\section*{APPENDIX}

\begin{lm} \label{L:1}
Let $X: \Omega  \to \R^{n_x}$ and $W: \Omega \to \R^{n_w}$ be two independent random variables defined over the probability space $(\Omega, \mathcal{F},\mathsf P)$, and let $f:\R^{n_x}\to \R^{n_w}$ be a measurable function. Let $\{\psi^1(\cdot),\ldots,\psi^M(\cdot) \}$ be $M$ measurable functions from $\R^{n_w}$ to $\R$. Then there exists a function $\tilde \psi: \R^{n_w}\to \R\cup\{-\infty,+\infty\}$ such that
\begin{align*}
\E\left[\min_{i}\{\psi^i(f(X)+W)\}~\big|X\right]=\tilde{\psi}(f(X)).
\end{align*}
\end{lm}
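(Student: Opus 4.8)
The plan is to reduce the statement to the standard ``freezing'' (substitution) lemma for conditional expectations of independent random variables. First I would introduce the law $\mu_W$ of $W$ on $\R^{n_w}$ and the jointly measurable map $\phi:\R^{n_x}\times\R^{n_w}\to\R$ defined by $\phi(x,w)=\min_{i=1,\ldots,M}\psi^i(f(x)+w)$; joint measurability holds because $f$ is measurable, addition is continuous, each $\psi^i$ is measurable, and a pointwise minimum of finitely many measurable functions is measurable. Then I would define the candidate function $\tilde\psi:\R^{n_w}\to\R\cup\{-\infty,+\infty\}$ by
\[
\tilde\psi(y)\triangleq\E\big[\min_i\psi^i(y+W)\big]=\int_{\R^{n_w}}\min_i\psi^i(y+w)\,\mu_W(\mathrm dw),
\]
with the integral read in the extended sense by splitting $\phi(y,\cdot)$ into its positive and negative parts, and fixing an arbitrary default value (say $+\infty$) for those $y$ at which both parts integrate to $+\infty$.

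The argument then proceeds in three steps. (i) Measurability of $\tilde\psi$: since $(y,w)\mapsto\min_i\psi^i(y+w)$ is jointly measurable, Tonelli's theorem applied separately to the nonnegative functions $\phi^+(y,\cdot)$ and $\phi^-(y,\cdot)$ shows that $y\mapsto\int\phi^{\pm}(y,w)\,\mu_W(\mathrm dw)$ are measurable into $[0,+\infty]$, hence so is their difference $\tilde\psi$ on its domain of definition, and $\tilde\psi\circ f$ is then a measurable function of $X$. (ii) Freezing: because $X$ and $W$ are independent, the conditional law of $W$ given $X$ is $\mu_W$, independent of $X$, so for the nonnegative integrand $\phi^{+}$ one has $\E[\phi^{+}(X,W)\mid X]=\Psi^{+}(X)$ a.s.\ with $\Psi^{+}(x)\triangleq\int\phi^{+}(x,w)\,\mu_W(\mathrm dw)$, and likewise for $\phi^{-}$; subtracting on the event where at least one of $\Psi^{\pm}(X)$ is finite gives $\E[\phi(X,W)\mid X]=\Psi(X)$ with $\Psi(x)=\int\phi(x,w)\,\mu_W(\mathrm dw)$. (iii) Identification: by construction $\Psi(x)=\E[\min_i\psi^i(f(x)+W)]=\tilde\psi(f(x))$, hence $\E[\min_i\psi^i(f(X)+W)\mid X]=\tilde\psi(f(X))$, which is the claimed identity, and on the conventionally handled exceptional event one simply declares both sides equal to the same default value.

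The only real obstacle is the extended-real-valued bookkeeping: the freezing lemma is usually quoted for integrable integrands, so I would be careful to run it on $\phi^{+}$ and $\phi^{-}$ separately, where everything lives in $[0,+\infty]$ and no integrability hypothesis is needed, and only then recombine. Apart from that, the proof is completely routine, and it is worth noting that the specific ``$\min$'' structure plays no role beyond joint measurability --- the statement would hold with $\min_i\psi^i$ replaced by any jointly measurable function of $(f(X),W)$ --- but retaining the $\min$ is what makes the application to the value function $C_k(\cdot)$ in Theorem~\ref{T:optmdp} immediate.
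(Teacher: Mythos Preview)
Your proof is correct. Both you and the paper arrive at the same function $\tilde\psi(y)=\int_{\R^{n_w}}\min_i\psi^i(y+w)\,\mu_W(\mathrm dw)$, but the routes differ in emphasis. The paper constructs $\tilde\psi$ by first partitioning $\R^{n_w}$ into (disjointified) regions $\tilde R^\Psi_i(y)$ on which $\psi^i$ attains the minimum, and then writing $\tilde\psi(y)=\sum_i\int_{\tilde R^\Psi_i(y)}\psi^i(y+w)\,d\mathsf P(w)$; this exploits the finite-minimum structure explicitly but introduces bookkeeping (the random tie-breaking to make the regions disjoint) that is not needed for the conclusion and glosses over measurability of the resulting pieces. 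Your argument instead invokes the standard freezing/substitution lemma directly and is both shorter and more careful: you verify joint measurability, handle the extended-real-valued case by splitting into positive and negative parts and applying Tonelli, and correctly note that the $\min$ plays no role beyond joint measurability. The paper's partition decomposition does not buy anything extra for the lemma itself or for its subsequent use in Theorem~\ref{T:optmdp}, where only the existence of $\tilde\psi$ as a function of $f(X)$ is invoked; your approach is the more economical and rigorous one.
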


\begin{proof}
Let us denote  
\begin{align*}
R^\Psi_i=\{w\in \R^{n_w}~|~\psi^i(w)\le \psi^j(w),  j=1,2,\ldots,M\},
\end{align*}
so that 
\begin{align*}
\cup_{i=1}^MR^\Psi_i=\R^{n_w}.
\end{align*}
Note that $R^{\Psi}_i$ and $R^\Psi_j$ may not necessarily be disjoint for all $i,j$.
For all $y\in R^{n_w}$, let us  define  $R^\Psi_i(y)=\{w\in \R^{n_w}~|~\psi^i(y+w)\le \psi^j(y+w),  j=1,2,\ldots,M\}$. 
One may think of $R^\Psi_i(y)$ as the translation of the set $R^\Psi_i$ by the vector $y$. 
It is also true that $\cup_{i=1}^MR^\Psi_i(y)=\R^{n_w}$ for all $y\in \R^{n_w}$. 
If $R^\Psi_i(y)\cap R^\Psi_j(y)\ne \varnothing$ for some $y$, then we randomly assign the elements of the set $R^\Psi_i(y)\cap R^\Psi_j(y)$ either to $R^\Psi_i(y)$ or to $R^\Psi_j(y)$. 
Let us denote these sets to be $\{\tilde{R}^\Psi_i(y)\}_{i=1}^M$ after this random assignment. The random assignment is to ensure $\tilde R^\Psi_i(y)\cap \tilde R^\Psi_j(y)= \varnothing$. However, we still have $\cup_{i=1}^M\tilde{R}^\Psi_i(y)=\R^{n_w}$ and
\begin{align*}
\tilde R^\Psi_i(y)=\{w\in \R^{n_w}~|~&\psi^i(y+w)\le \psi^j(y+w), \\& j=1,2,\ldots,M\}.
\end{align*}

Using $\{\tilde{R}^\Psi_i(y)\}_{i=1}^M$ as a partition of $\R^{n_w}$, we can write
\begin{align*}
\E\left[\min_{i}\{\psi^i(f(X)+W)\}~\big|X\right]=\int_{\R^{n_w}}\min_{i}\{\psi^i(y+w)\}d\mathsf P(w),
\end{align*}
where $y=f(X)$ for notational convenience. Therefore,
 \begin{align*}
\E\left[\min_{i}\{\psi^i(f(X)+W)\}~\big|X\right]=&\sum_{i=1}^M\int_{\tilde R^{\Psi}_i(y)}\psi^i(y+w)d\mathsf P(w)\\
\triangleq&\sum_{i=1}^M\tilde{\psi}_i(y),
\end{align*}
where we define $\tilde{\psi}_i(y)=\int_{\tilde R^{\Psi}_i(y)}\psi^i(y+w)d\mathsf P(w)$. If we define $\tilde \psi(y)=\sum_{i=1}^M\tilde{\psi}_i(y)$ for all $y\in \R^{n_w}$, then 
\begin{align*}
\E\left[\min_{i}\{\psi^i(f(X)+W)\}~\big|X\right]=\tilde{\psi}(y)=\tilde{\psi}(f(X)).
\end{align*}
\end{proof}
Under the hypothesis that $X$ and $W$ are independent, and $\{\psi^i(\cdot)\}$ being a set of measurable functions from $\R^{n_w}$ to $\R$, there always exists a measurable function $\hat \psi:\R^{n_x}\to \R$ such that 
\begin{align*}
\E\left[\min_{i}\{\psi^i(f(X)+W)\}~\big|X\right]=\hat \psi(X).
\end{align*}
The above lemma  shows that $\hat \psi$ can be represented as a composition $\hat \psi(\cdot)=\tilde{\psi}\circ f(\cdot)$.

\subsection{Proof of Theorem \ref{T:optmdp}} \label{A:proof}
\begin{proof}
Let us note that at $k=T-1$ we have
\begin{align*}
C_{T-1}(s)=&\min_{\gamma^\theta_{T-1}}\E\left[\tilde{c}(S_T,\theta_{T-1}) ~|~S_{T-1}=s\right]\\
=&\min_{\gamma^\theta_{T-1}}\E\left[S_T\T \tilde N_{T-1} S_T+\theta_{T-1}\T \Lambda~|S_{T-1}=s\right]\\
=&\min_{\gamma^\theta_{T-1}}\E\Big[\Big\|HS_{T-1}-\begin{bmatrix}
G_{T-1}(S_{T-1})\theta_{T-1}\\0
\end{bmatrix}\Big\|^2_{\tilde{N}_{T-1}}\\
&~~~~+\theta_{T-1}\T \Lambda~|S_{T-1}=s\Big],
\end{align*}
where the last equality follows from $\tilde{N}_{T-1}\begin{bmatrix}
0\\W_{T-1}
\end{bmatrix}=\begin{bmatrix}
0\\0
\end{bmatrix} $.
Recall that $H=\begin{bmatrix}
A~~&{I}\\{0}~~&{0}
\end{bmatrix}$ and $G_{t}(S_t)=[\hat w_{t-1}^1(S_t), \hat w_{t-1}^2(S_t),\ldots,\hat w_{t-1}^M(S_t)]$ where $\hat w_{t-1}^i(S_t)=\E[W_{t-1}~|~g_i(W_{t-1})]$, 
 and $S_t=[\Delta_{t-1}\T , W_{t-1}\T ]\T $. Here we write the argument in $\hat w^i_{t-1}(\cdot)$ to emphasize the fact that it depends on $S_t$.  

One can verify that 
\begin{align*}
\Big\|&\begin{bmatrix}
G_{T-1}(S_{T-1})\theta_{T-1}\\0
\end{bmatrix}\Big\|^2_{\tilde{N}_{T-1}}\\&=\theta_{T-1}\T G_{T-1}(S_{T-1})\T N_{T-1}G_{T-1}(S_{T-1})\theta_{T-1}\\&=\sum_{i=1}^M\big(\hat{w}^i_{T-2}(S_{T-1})\big)\T N_{T-1}\big(\hat{w}^i_{T-2}(S_{T-1})\big)\theta_{T-1}^i\\
&\triangleq\tilde{G}_{T-1}(S_{T-1})\theta_{T-1},
\end{align*}
where $\tilde{G}_{T-1}(S_{T-1})$ is the vector $[\|\hat w^1_{T-2}(S_{T-1})\|_{N_{T-1}}^2, \ldots ,\|\hat w^M_{T-2}(S_{T-1})\|_{N_{T-1}}^2]$.

Thus, after some simplifications,
\begin{align*}
C_{T-1}(s)=&\min_{\gamma^\theta_{T-1}}\Big\{S_{T-1}\T H\T \tilde{N}_{T-1}HS_{T-1}\\&-2S_{T-1}\T \begin{bmatrix}
A\T {N}_{T-1}G_{T-1}(S_{T-1})\\{0}
\end{bmatrix}\theta_{T-1}\\
&+\tilde{G}_{T-1}(S_{T-1})\theta_{T-1}+\theta_{T-1}\T \Lambda\Big\}\\
=&\min_{\gamma^\theta_{T-1}}\Big\{ \Psi_{T-1}(S_{T-1})\theta_{T-1}\Big\}\\&+S_{T-1}\T H\T \tilde{N}_{T-1}HS_{T-1}
\end{align*}
where $\Psi_{T-1}(S_{T-1})=-2S_{T-1}\T \begin{bmatrix}
A\T {N}_{T-1}G_{T-1}(S_{T-1})\\{0}
\end{bmatrix}+\tilde{G}_{T-1}(S_{T-1})+\Lambda\T $ is an $M$ dimensional row vector. 
Let us denote by $\Psi_{T-1}(\cdot)=[\psi^i_{T-1}(\cdot),\psi^2_{T-1}(\cdot),\ldots,\psi^M_{T-1}(\cdot)]$, where $\psi^i_{T-1}:\R^{2n}\to \R$ is the $i$-th component of $\Psi_{T-1}$.
Thus, $\min_{\gamma^\theta_{T-1}}\Big\{ \Psi_{T-1}(S_{T-1})\theta_{T-1}\Big\}=\min_i\{\psi^i_{T-1}(S_{T-1})\}$. Let
\begin{align*}
i^*(S_{T-1})=\arg\min_i\{\psi^i_{T-1}(S_{T-1})\}.
\end{align*}
If $i^*(S_{T-1})$ is not unique, then without loss of generality,  one of the minimizers is randomly selected.\footnote{In case there are multiple $i \in \{0,1,\ldots,M\}$ that minimizes $\min_i\{\psi^i_{T-1}(S_{T-1})\}$, let us introduce the set $I^*=\{i^*_1,i^*_2,\ldots,i^*_l\}$ where  $l\le M$, $i^*_j\in \{1,2,\ldots,M\}$ and each $i^*_j$ is a minimizer. In such a case, one can use a randomized policy over this set of minimizers to select the quantizers, \eg $\gamma^\theta_{T-1}(S_{T-1})=b_j$ with probability $ p_j \in [0,1]$ where $j\in I^*$ and $\sum_{j\in I^*}p_j=1$. Nonetheless, the value $C_{T-1}(S_{T-1})$ remains unaffected by the choice of $p_j$.} 
It follows
\begin{align*}
\theta^*_{T-1}=\gamma^{\theta*}_{T-1}(S_{T-1})=b_{i^*(S_{T-1})},
\end{align*}
where $b_i\in \A$ is the $i$-th basis vector in $\R^M$.
Thus,
\begin{align*}
C_{T-1}(S_{T-1})=S_{T-1}\T \Phi_{T-1}S_{T-1}+\min_i\{\psi^i_{T-1}(S_{T-1})\},
\end{align*}
where $\Phi_{T-1}=H\T \tilde{N}_{T-1}H$. 

Let us now assume that for some $k$ it is true that
\begin{align} \label{E:CK_assumptionMDP}
C_k(S_k)=S_k\T \Phi_kS_k+\min_{i=1,\ldots,M}\{\psi^i_k(S_k)\}.
\end{align}

Recall that,
\begin{align*}
C_{k-1}(S_{k-1})=\min_{\gamma^\theta_{k-1}}\E\Big[S_k\T \tilde N_kS_k+\theta_{k-1}\T \Lambda+C_k(S_k)~|S_{k-1}\Big].
\end{align*}

Using the hypothesis, let us replace $C_k(S_k)$ in the above equation, and consequently
\begin{align} \label{E:costk}
C_{k-1}(S_{k-1})=&\min_{\gamma^\theta_{k-1}}\E\Big[S_k\T (\tilde N_k+\Phi_k)S_k+\theta_{k-1}\T \Lambda \nonumber\\&+\min_{i=1,\ldots,M}\{\psi^i_k(S_k)\}~|~S_{k-1}\Big].
\end{align}

Using the dynamics of $S_t$ in \eqref{E:s} one can write $S_{k}=f_0(k-1,S_{k-1},\theta_{k-1})+\begin{bmatrix}
0\\W_{k-1}
\end{bmatrix}$. The expression for $f_0(\cdot,\cdot,\cdot)$ can be obtained from  \eqref{E:s}
\begin{align*}
f_0(t,S,\theta)=HS-\begin{bmatrix}
G_{t}(S)\theta\\0
\end{bmatrix}.
\end{align*}
Since $S_{k-1}$ and $W_{k-1}$ are independent random variables and $\theta_{k-1}=\gamma^\theta_{k-1}(S_{k-1})$, we can use Lemma \ref{L:1}\footnote{Use $S_{k-1}$ as $X$ and $f_0(K-1,S_{k-1},\gamma^\theta_{k-1}(S_{k-1}))$ as the $f(X)$ in Lemma \ref{L:1}.} to conclude that there exists a function $\tilde \psi_k:\R^{2n}\to \R$ such that
\begin{align}\label{E:psi_tilde}
\tilde \psi_k(f_0(k-1,S_{k-1},\theta_{k-1}))=\E\left[\min_{i=1,\ldots,M}\{\psi^i_k(S_k)\}~|~S_{k-1} \right].
\end{align}

Since $\theta_{k-1}\in \A=\{b_1,b_2,\ldots,b_M\}$, we define
\begin{align*}
\tilde{\psi}^i_k(S_{k-1})\triangleq \tilde{\psi}_k(f_0(k-1,S_{k-1},b_i)).
\end{align*}
Therefore,
\begin{align} \label{E:psi_tilde_sum}
\tilde \psi_k(f_0(k-1,S_{k-1},\theta_{k-1}))=\sum_{i=1}^M\tilde{\psi}^i_k(S_{k-1})\theta^i_{k-1}.
\end{align}

Following similar steps as before, it can be shown that
\begin{align*}
\E&\left[S_k\T (\tilde N_k+\Phi_k)S_k~|S_{k-1}\right]=S_{k-1}\T H\T (\tilde N_k+\Phi_k)HS_{k-1}\\
&+\E\left[ \begin{bmatrix}
0\\W_{k-1}
\end{bmatrix}\T (\tilde N_k+\Phi_k)\begin{bmatrix}
0\\W_{k-1}
\end{bmatrix}\right]\\
&-2\sum_{i=1}^MS_{k-1}\T H\T (\tilde N_k+\Phi_k)\begin{bmatrix}
\hat w^i_{k-2}(S_{k-1})\\0
\end{bmatrix}\theta^i_{k-1}\\
&+\sum_{i=1}^M \begin{bmatrix}
\hat w^i_{k-2}(S_{k-1})\\0
\end{bmatrix}\T  (\tilde N_k+\Phi_k) \begin{bmatrix}
\hat w^i_{k-2}(S_{k-1})\\0
\end{bmatrix}\theta^i_{k-1}.
\end{align*}
Using the fact $\E[W_{k-1}W_{k-1}\T ]=\W$, we obtain
\begin{align} \label{E:sk_sim}
\E&\left[S_k\T (\tilde N_k+\Phi_k)S_k~|S_{k-1}\right]=S_{k-1}\T H\T (\tilde N_k+\Phi_k)HS_{k-1}\nonumber \\
&+\tr\Big((\tilde N_k+\Phi_k)\begin{bmatrix}
{0} & {0}  \\ {0} & \W
\end{bmatrix}\Big) \nonumber\\
&-2\sum_{i=1}^MS_{k-1}\T H\T (\tilde N_k+\Phi_k)\begin{bmatrix}
\hat w^i_{k-2}(S_{k-1})\\0
\end{bmatrix}\theta^i_{k-1} \nonumber \\
&+\sum_{i=1}^M \begin{bmatrix}
\hat w^i_{k-2}(S_{k-1})\\0
\end{bmatrix}\T  (\tilde N_k+\Phi_k) \begin{bmatrix}
\hat w^i_{k-2}(S_{k-1})\\0
\end{bmatrix}\theta^i_{k-1}.
\end{align}

Let us now define $\psi^i_{k-1}(S_{k-1})$ as follows
\begin{align} \label{E:psi_k-1}
\psi^i_{k-1}(S_{k-1})&=\tilde{\psi}^i_k(S_{k-1})+\lambda_i+\tr\Big((\tilde N_k+\Phi_k)\begin{bmatrix}
{0} & {0}\\{0} & \W
\end{bmatrix}\Big) \nonumber \\&+\begin{bmatrix}
\hat w^i_{k-2}(S_{k-1})\\0
\end{bmatrix}\T  (\tilde N_k+\Phi_k) \begin{bmatrix}
\hat w^i_{k-2}(S_{k-1})\\0
\end{bmatrix} \nonumber\\
&-2S_{k-1}\T H\T (\tilde N_k+\Phi_k)\begin{bmatrix}
\hat w^i_{k-2}(S_{k-1})\\0
\end{bmatrix}.
\end{align}
 
 Using \eqref{E:psi_tilde}, \eqref{E:psi_tilde_sum}, \eqref{E:sk_sim} and \eqref{E:psi_k-1} we can rewrite \eqref{E:costk} as
 \begin{align*}
 C_{k-1}(S_{k-1})=S_{k-1}\T \Phi_{k-1}S_{k-1}+\min_{\gamma^\theta_{k-1}}\{\sum_{i=1}^M\psi^i_{k-1}(S_{k-1})\theta^i_{k-1}\},
 \end{align*}
where $\Phi_{k-1}=H\T (\tilde N_k+\Phi_k)H$.

Let us denote
\begin{align*}
i^*_{k-1}(S_{k-1})=\arg\min_i\{\psi^i_{k-1}(S_{k-1})\}.
\end{align*}
If $i^*_{k-1}(S_{k-1})$ is not unique, then without loss of generality  one of the minimizers is randomly selected. Therefore,
\begin{align*}
\theta^*_{k-1}=\gamma^{\theta*}_{k-1}(S_{k-1})=b_{i^*_{k-1}(S_{k-1})},
\end{align*}
where $b_i\in \A$ is the $i$-th basis vector in $\R^M$. Thus,
\begin{align*}
C_{k-1}(S_{k-1})=S_{k-1}\T \Phi_{k-1}S_{k-1}+\min_i\{\psi^i_{k-1}(S_{k-1})\},
\end{align*}
and $C_k(S_k)$ is indeed of the form of \eqref{E:CK_assumptionMDP}. 

Let us now define the region
\begin{align*}
\tilde {\mathcal{R}}^i_{k-1}=\{s\in \R^{2n}~|~\psi^i_{k-1}(s)\le \psi^j_{k-1}(s) ~~\forall j\} \subseteq \R^{2n},
\end{align*}

and let us also define
\begin{align*}
\mathcal{R}^1_{k-1}&=\tilde{\mathcal{R}}^1_{k-1},\\
\mathcal{R}^i_{k-1}&=\tilde{\mathcal{R}}^i_{k-1}\setminus \cup_{j=1}^{i-1}(\tilde{\mathcal{R}}^i_{k-1}\cap\tilde{\mathcal{R}}^j_{k-1}).
\end{align*}
It can be verified that $\cup_{i=1}^M\tilde{\mathcal{R}}^i_{k-1}=\cup_{i=1}^M{\mathcal{R}}^i_{k-1}=\R^{2n}$. 
Moreover, $\mathcal{R}^i_{k-1}\cap \mathcal{R}^j_{k-1}=\varnothing$ for all $i,j\in \{1,2,\ldots,M\}$.
By construction, if $S_k\in \mathcal{R}^j_{k-1}$ for some $j$, then
$\psi^j_{k-1}(S_k)=\min_i\{\psi^i_{k-1}(S_k)\}$. 
Thus, $\gamma^{\theta*}_{k-1}(S_k)=b_j$ is an optimal strategy. In other words, $\theta^{i*}_{k-1}=1_{\mathcal{R}^i_{k-1}}(S_k)$ is optimal.
\end{proof}




\bibliographystyle{IEEEtran}
\bibliography{biblio}

\begin{IEEEbiography}[{\includegraphics[width=1in,height=1.25in,clip,keepaspectratio]{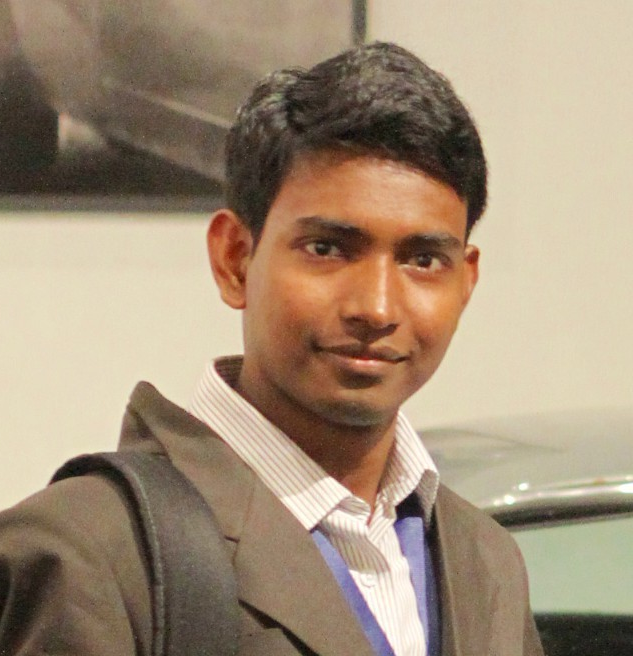}}]{Dipankar Maity}
 received the B.E. degree in Electronics and Telecommunication Engineering from Jadavpur University, India in 2013, and the Ph.D degree in Electrical and Computer Engineering from University of Maryland College Park, USA in 2018.  During his Ph.D, he was a visiting scholar at the Technische Universit\"{a}t M\"{u}nchen (TUM) and at the Royal Institute of Technology (KTH) Sweden. Currently, he is a Postdoctoral Fellow at Georgia Institute of Technology. 

His research interests include temporal logic based controller synthesis, Control with logical constraints, control with communication constraints, intermittent-feedback control,
event-triggered control, stochastic games, and integration of these ideas in the context of cyber-physical-systems. 
\end{IEEEbiography}

\begin{IEEEbiography}[{\includegraphics[width=1in,height=1.25in,clip,keepaspectratio]{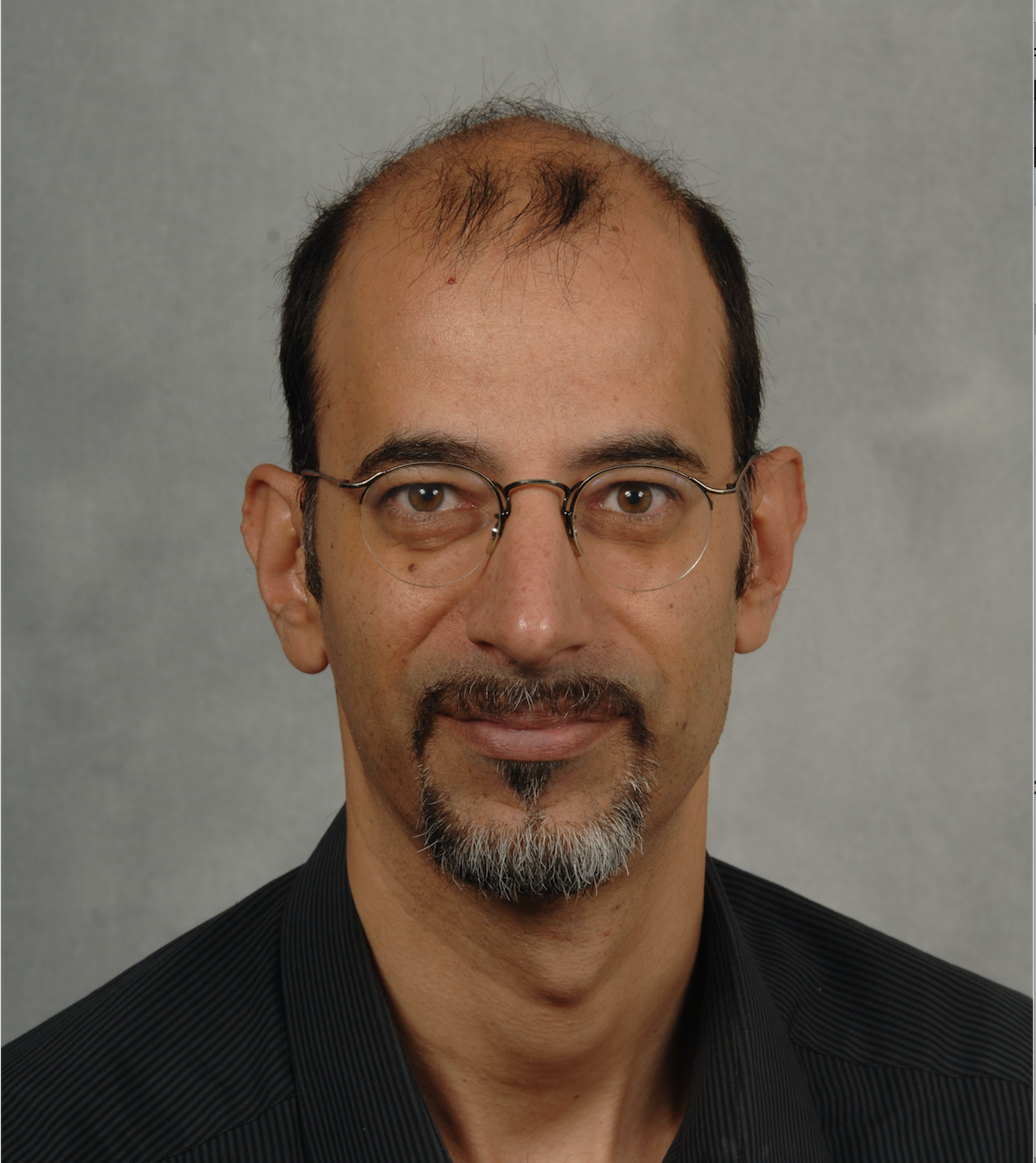}}]{Panagiotis Tsiotras}
  is the David and Andrew Lewis Chair Professor in the D. Guggenheim School of Aerospace Engineering at the Georgia Institute of Technology (Georgia Tech), and the Director of the Dynamics and Controls Systems Laboratory (DCSL) in the same school, as well as Associate Director of the Institute for Robotics and Intelligent Machines at Georgia Tech. He holds degrees in Aerospace Engineering, Mechanical Engineering, and Mathematics.

He has held visiting research appointments at MIT, JPL, INRIA Rocquencourt, and Mines ParisTech. His research interests include optimal control of nonlinear systems and ground, aerial and space vehicle autonomy. He has served in the Editorial Boards of the Transactions on Automatic Control, the IEEE Control Systems Magazine, the AIAA Journal of Guidance, Control and Dynamics, the Dynamic Games and Applications, and Dynamics and Control. He is the recipient of the NSF CAREER award, the Outstanding Aerospace Engineer award from Purdue, and the Technical Excellence Award in Aerospace Control from IEEE. He is a Fellow of AIAA, IEEE, and AAS, and a member of the Phi Kappa Phi, Tau Beta Pi, and Sigma Gamma Tau Honor Societies.

\end{IEEEbiography}

\end{document}